 \newtheorem{thm}{Theorem}[section]
 \newtheorem{lem}[thm]{Lemma}
 \theoremstyle{definition}
 \theoremstyle{remark}
 \numberwithin{equation}{section}
\def\<{\leqslant}           
\def\>{\geqslant}           
\def\div{\mathrm{div}}         
\def\d{\partial}
\def\wh{\widehat}
\def\wt{\widetilde}
\def\Argmin{\mathop{\mathrm{Argmin}}}
\def\Argmax{\mathop{\mathrm{Argmax}}}
\def\cH{\mathcal{H}}   
\def\mZ{\mathbb{Z}}    
\def\mR{\mathbb{R}}    
\def\mC{\mathbb{C}}    
\def\Tr{\mathrm{Tr}}       
\def\rT{\mathrm{T}}        
\def\rB{\mathrm{B}}        
\def\bE{\mathbf{E}}    
\def\[[[{[\![\![}
\def\]]]{]\!]\!]}
\def\bra{{\langle}}
\def\ket{{\rangle}}
\def\Bra{\Big\langle}
\def\Ket{\Big\rangle}
\def\re{\mathrm{e}}        
\def\rd{\mathrm{d}}        
\def\cL{\mathcal{L}}
\def\bD{\mathbf{D}}
\def\bF{\mathbf{F}}
\def\x{\times}
\def\fB{\mathfrak{B}}
\def\fS{{\mathfrak S}}
\def\mP{\mathbb{P}}
\def\bh{\mathbf{h}}
\def\cF{\mathcal{F}}
\def\fX{\mathfrak{X}}
\def\cG{\mathcal{G}}
\def\cA{\mathcal{A}}
\def\cB{\mathcal{B}}
\def\cov{\mathbf{cov}}
\def\cT{{\mathcal T}}
\def\mS{\mathbb{S}}
\def\mT{\mathbb{T}}
\def\mZ{\mathbb{Z}}
\begin{document}
%
%
%
%
%
%
%
%

\title[Conditioning, Entropy and Equilibrium in Stochastic Hamiltonian Systems]
 {\Large Position-Momentum Conditioning, Relative Entropy Decomposition  and Convergence to Equilibrium in Stochastic Hamiltonian Systems}

\author[Igor G. Vladimirov]{Igor G. Vladimirov}

\address{%
Australian National University,
 ACT 2601,
  Canberra,
   Australia}

\email{igor.g.vladimirov@gmail.com}

\thanks{This work is supported by the Australian Research Council  grants DP210101938 and DP200102945.}

\subjclass{
34F05, 
35Q84,  
82B31, 
82C31,  
37H30, 
82M60, 
37J25, 
94A17.  
} 

\keywords{stochastic Hamiltonian system; Fokker-Planck-Kolmogorov equation; conditional distribution; equilibrium measure; relative entropy; dissipation inequality.
}


\begin{abstract}
This paper is concerned with a class of multivariable stochastic Hamiltonian systems
whose
generalised position is related by an ordinary differential equation to the momentum 
governed by an Ito stochastic differential equation.   The latter is driven by a standard Wiener process and involves both conservative and viscous damping forces. With the mass,  diffusion and damping matrices being position-dependent,   the resulting nonlinear model of Langevin dynamics describes dissipative mechanical systems (possibly with rotational degrees of freedom) or their electromechanical analogues subject to external random forcing.
We study the time evolution of the joint position-momentum probability distribution for the system and its convergence to equilibrium  by decomposing the Fokker-Planck-Kolmogorov equation (FPKE) and the Kullback-Leibler  relative entropy with respect to the invariant measure into those for the position distribution and the momentum distribution conditioned on the position. This decomposition reveals a manifestation of the Barbashin-Krasovskii-LaSalle principle
and higher-order dissipation inequalities for the relative entropy as a Lyapunov functional  for the FPKE.
\end{abstract}

\maketitle

\section{Introduction}\label{sec:intro}

The dynamics of a physical system and its interaction with the environment are strongly influenced by the transformation of energy from one type to another (or energy exchange between different subsystems) and its dissipation into heat (through mechanical friction,  viscous damping, electrical resistance or other mechanisms). The energy balance relations take into account both conservative and nonconservative forces and   are incorporated in the structure of equations of motion  in Lagrangian and Hamiltonian mechanics which  applies to mechanical systems, electrical circuits and their electromechanical hybrids. 
Energy transfer and dissipation  are crucial in control-by-interconnection  for
port-Hamiltonian systems \cite{VJ_2014}.
The Hamiltonian, which describes the total energy, is a natural candidate for a Lyapunov function in analysis of stability of equilibria in dissipative systems.
However, in the case of a Langevin damping force \cite{Z_2001}, which depends linearly on the generalised velocity, the time derivative of the Hamiltonian along a phase trajectory vanishes whenever the velocity (and hence, the generalised momentum) does. This prevents the Hamiltonian from being a strict Lyapunov function and requires subtler considerations, such as application of the Barbashin-Krasovskii-LaSalle (BKL) invariance principle \cite{BK_1952,L_1960}, 
which allows the asymptotic stability in this case  to be established  through verifying that the time derivative of the Hamiltonian vanishes forever only at equilibrium (that is, at a stationary point of the potential energy and zero momentum). The particular form of  the BKL principle for damped Hamiltonian systems is dictated by the symplectic structure, associated with the splitting of the dynamic variables into positions and momenta and the total energy into the potential and kinetic parts.

In the presence of external random forcing,  the resulting stochastic  Hamiltonian system  with dissipation has more complicated equilibria which are organised as invariant measures (rather than points) in the phase space. Such a system  is governed by an ODE for the position (relating the velocity  to the momentum through a mass matrix) and an Ito  stochastic differential equation (SDE) for the momentum. The latter is driven by a standard Wiener process \cite{KS_1991} (weighted by the square root of a diffusion matrix),  and its drift term involves a conservative force (the negative of the gradient of the potential energy) along with a Langevin viscous  damping force (specified by a damping matrix).
The mass,  diffusion and damping matrices can be position-dependent, thus making the Langevin dynamics applicable to dissipative mechanical systems   influenced by a random environment (such as flexible structures with rotational degrees of freedom,  where the moments of inertia, dissipation and exposure to external forces depend on the current spatial configuration of the system).
This class of systems (including their electromechanical analogues) is used, for example, in modelling molecular dynamics \cite{H_1986,SK_2020,T_2010},
turbulent  fluid-structure interaction (see \cite{I_2019} and references therein) and electrical circuits with thermal noise \cite{R_1974}.

The invariant measure of a dissipative stochastic Hamiltonian system is a steady-state solution of the Fokker-Planck-Kolmogorov equation (FPKE) \cite{BKRS_2015,R_1996,S_2008} which,
in this case,
describes the time evolution of the joint probability density function (PDF) of the position and momentum variables forming a diffusion process in the phase space. Under a multivariate version of the Einstein relation \cite{K_1966} between the damping and diffusion matrices, the Maxwell-Boltzmann PDF, which originates from the minimum Helmholtz  free energy principle of equilibrium statistical mechanics \cite{R_1978} and is completely specified by an inverse  temperature parameter and the Hamiltonian, is an invariant position-momentum PDF. Since the corresponding invariant PDF for the position variables achieves its maxima and is concentrated (asymptotically, as the temperature goes to zero) at the minima of the potential energy, the stochastic Hamiltonian models extend the Smoluchowski  SDEs (employed as randomised algorithms for minimising the potential energy) by taking into account the inertial effects. A similar extension is used in the ``heavy-ball'' method \cite{P_1964}, which adds inertia to the gradient descent algorithms, making them reminiscent of  the physical system dynamics.

It is customary to describe the convergence of the system to its thermal equilibrium (with an external  heat bath) in terms of the Kullback-Leibler   relative entropy \cite{CT_2006}, which, in the stochastic Hamiltonian setting,  is applied to the position-momentum PDF and is considered with respect to the invariant Maxwell-Boltzmann PDF by using an appropriate PDF ratio.
In addition to being proportional to the deviation of the Helmholtz free energy from its minimum value, this entropy has close links with information theory \cite{G_2008,ME_1981}. Furthermore, over the course of evolution of the dissipative stochastic Hamiltonian system, the relative entropy does not increase in time, which makes it a natural candidate for a Lyapunov functional on the infinite-dimensional manifold of position-momentum PDFs rather than the finite-dimensional phase space of the system. However, similarly to the Hamiltonian in the deterministic dissipative case, the time derivative of the relative entropy can vanish at some moments of time before the equilibrium is reached, and hence, the entropy cannot be directly used as a strict Lyapunov functional for the convergence of the stochastic Hamiltonian system to its equilibrium measure. A detailed study of this issue is the main thrust of the present paper.

More precisely, we consider the class of multivariable dissipative stochastic Hamiltonian systems satisfying the damping-diffusion relation mentioned above. For such a system, the position-momentum PDF is factorised into a position PDF and a momentum PDF conditioned on the position, and a similar factorisation holds for the invariant PDF. Accordingly, the relative entropy of the position-momentum distribution with respect to the invariant  measure is decomposed into the sum of the position entropy and the conditional momentum entropy. We represent the FPKE for the position-momentum PDF  in the form of two coupled PDEs for the position PDF and the conditional momentum PDF. These PDEs lead to  higher-order dissipation relations for the position and conditional momentum entropies, along with the property that the time derivative of the position-momentum entropy vanishes whenever the conditional momentum PDF coincides with its equilibrium counterpart everywhere in the phase space. We show that, unless the total equilibrium is already reached (that is, unless the position PDF is also at equilibrium), any such ``entropy dissipation break'' is followed by an increase in the conditional momentum entropy over a sufficiently short interval of time, which is  accompanied by a decrease in the position entropy in such a way that the position-momentum relative entropy, as a function of time, experiences a stationary point of inflection. As a result, the entropy break instants (at which the total relative entropy has zero time derivative  before the system reaches its equilibrium) are isolated and form a countable set, thus making the relative entropy a strictly decreasing function of pre-equilibrium time. The structure of this set (and the ``waterbed'' effect, by which the increase in one of the entropy components causes a decrease in the other as if there were an entropy flow between them)  is a manifestation of the BKL principle in application to the relative entropy as a Lyapunov functional  for the FPKE. In the stochastic Hamiltonian setting, this principle reveals itself through the position-momentum conditioning, so that the position PDF and the position entropy play a role similar to the position and potential energy, respectively, while the conditional momentum PDF and the conditional momentum entropy correspond to the momentum and the kinetic energy. This correspondence between the total energy dissipation in the deterministic case and the position-momentum relative entropy dissipation in the stochastic case is the principal message of the present study, which does not consider the rate of this dissipation in the long run discussed, for example, in \cite{HN_2004,T_2002}. Nevertheless, we also outline the role which the position-momentum conditioning can play for spectral analysis of the PDF dynamics in the  stochastic Hamiltonian setting through linearised dynamics of the position and conditional momentum logarithmic PDF ratios near the equilibrium.

The paper is organised as follows.
Section~\ref{sec:sys} specifies the class of stochastic Hamiltonian systems under consideration.
Section~\ref{sec:PDF} discusses the PDF dynamics for the position-momentum state process of the system.
Section~\ref{sec:inv} provides a multivariate damping-diffusion relation and
the invariant PDF corresponding to the postulate of equilibrium statistical mechanics.
Section~\ref{sec:posmoment} describes the decomposition of the position-momentum relative entropy with respect to the invariant PDF into the position and conditional momentum entropies.
Section~\ref{sec:relent} establishes dissipation relations for the position-momentum entropy in terms of its components and studies their behaviour in the vicinity of the entropy breaks.
Section~\ref{sec:lin} outlines a linearisation of the position and conditional momentum PDF dynamics. 
Section~\ref{sec:conc} makes concluding remarks.

\section{Stochastic Hamiltonian Systems with Damping}
\label{sec:sys}

We consider a class of stochastic Hamiltonian systems with $n$ degrees of freedom and phase space $S\x \mR^n$. For simplicity,  the position space $S$ is assumed to be either  $\mR^n$, the $n$-dimensional  torus $\mT^n$,  or the product $\mR^{n-r}\x \mT^r$ of such sets, where $\mT:= \mR / (2\pi \mZ)$  corresponds to a rotational degree of freedom and is identified with the interval $[0,2\pi)$ (so that functions on $\mT$ can be viewed as $2\pi$-periodic functions on $\mR$).
In each of these cases, $S$ is endowed with the $n$-dimensional Lebesgue measure $\lambda_n$.
The position of the system is specified by a vector  $q:= (q_k)_{1\< k\< n} \in S$ of generalised coordinates whose time derivatives form the generalised velocity $\dot{q} := (\dot{q}_k)_{1\< k\< n} \in \mR^n$ (vectors are assumed to be organised as columns). The kinetic energy
\begin{equation}
 \label{T}
    T(x)
    :=
    \frac{1}{2} \|p\|_{M(q)^{-1}}^2
    =
    \frac{1}{2}
    \|\dot{q}\|_{M(q)}^2
\end{equation}
of the system is a position-dependent quadratic form
in the generalised momentum vector
\begin{equation}
\label{pqdot}
    p
    :=
    \d_{\dot{q}}
    T(x)
    =
    M(q)\dot{q},
\end{equation}
where
\begin{equation}
\label{xqp}
    x
    :=
    (x_k)_{1\< k \< 2n}
    :=
    \begin{bmatrix}
      q\\
      p
    \end{bmatrix}
\end{equation}
is the position-momentum vector. For any $q\in S$, the generalised mass matrix
$M(q)$ is a real positive definite symmetric matrix of order $n$ (we denote the set of such matrices by $\mP_n$), which, in the case of rotational degrees of freedom, represents the tensor of inertia. Also, $\|v\|_N:= |\sqrt{N}v| = \sqrt{v^{\rT}N v}$ in (\ref{T}), (\ref{pqdot}) is a weighted Euclidean (semi-) norm of a real vector $v$, specified by a  real positive (semi-) definite symmetric  matrix $N$, with $|\cdot|$ the standard Euclidean norm.  The system Hamiltonian  $H: S \x \mR^n \to \mR$ is the sum of the kinetic energy (\ref{T}) and
the potential energy $V: S \to \mR$:
\begin{equation}
\label{HTV}
    H(x) := T(x) + V(q),
    \qquad
    q \in S,\
    p \in \mR^n.
\end{equation}
The functions $V$, $M$ are assumed to be twice continuously differentiable. 
The position $Q(t)$ and the momentum $P(t)$ of the system (depending on time $t\>0$)  are random processes with multivariable Langevin dynamics \cite{Z_2001} governed by an ODE
\begin{equation}
\label{SH1}
    \dot{Q}
    =
    M(Q)^{-1}P
    =
        \d_p H(X)
\end{equation}
and an Ito SDE \cite{KS_1991}
\begin{equation}
\label{SH2}
    \rd P
     =
    -(\d_q H(X) + F(Q)\dot{Q})\rd t
    +
    \sqrt{D(Q)}\rd W,
\end{equation}
driven by a  standard $\mR^n$-valued  Wiener process $W$, independent of the initial position and momentum $Q(0)$, $P(0)$.  The function $F: S\to \mP_n$   
specifies the Langevin viscous damping force $-F(Q)\dot{Q}$, and 
$D:S \to \mP_n$ describes the diffusion matrix for the SDE (\ref{SH2}). The $S\x \mR^n$-valued state process
\begin{equation}
\label{XQP}
    X
    :=
    \begin{bmatrix}
        Q\\
        P
    \end{bmatrix}
\end{equation}
satisfies the augmented SDE
\begin{align}
\nonumber
    \rd X
    & =
    \left(
        J H'(X)
        -
        \begin{bmatrix}
          0 \\
          F(Q)
        \end{bmatrix}
        \dot{Q}
    \right)
        \rd  t
        +
        \begin{bmatrix}
          0 \\
          \sqrt{D(Q)}
        \end{bmatrix}
        \rd W\\
\label{dX}
    & =
    \left(
        J
        -
        \begin{bmatrix}
          0 & 0 \\
          0 & F(Q)
        \end{bmatrix}
    \right)
    H'(X)
        \rd  t
        +
        \begin{bmatrix}
          0 \\
          \sqrt{D(Q)}
        \end{bmatrix}
        \rd W,
\end{align}
obtained by combining (\ref{SH1}), (\ref{SH2}), where
$(\cdot)'$ is the gradient of a function with respect to all its variables, so that
\begin{equation*}
\label{H'}
    H'
    =
    \d_x H
    =
    \begin{bmatrix}
        \d_q H \\
        \d_p H
    \end{bmatrix}
\end{equation*}
consists of the gradients of the Hamiltonian over the positions and momenta in (\ref{xqp}), with
\begin{equation}
\label{dHdq}
    \d_q H
     =
    V'(q)
    -
    \frac{1}{2}
    (
        p^{\rT}M_k(q)p
    )_{1\< k\< n}
    =
    V'(q)
    -
    \frac{1}{2}
    (
        \dot{q}^\rT (\d_{q_k}M) \dot{q}
    )_{1\< k\< n}
\end{equation}
in view of (\ref{T}), (\ref{HTV}).
Here, $-V' = -\d_qV$ is the potential force field, while the additional ``centrifugal'' term in (\ref{dHdq}), which depends quadratically  on the velocity $\dot{q}$,   originates from the dependence of the mass matrix $M$ on $q$:
\begin{equation}
\label{muk}
  M_k(q)
  :=
  -\d_{q_k}(M(q)^{-1})
  =
  M(q)^{-1}
  \d_{q_k}M(q)
  M(q)^{-1},
\end{equation}
so that the resulting maps $M_1, \ldots, M_n: S\to \mS_n$  take values in the subspace $\mS_n$ of  real symmetric matrices of order $n$. 
In (\ref{dX}), use is also made of the symplectic structure matrix
\begin{equation}
\label{J}
    J:=
    \begin{bmatrix}
        0 & I_n \\
        -I_n & 0 \\
    \end{bmatrix},
\end{equation}
where 
$I_n$ is the identity  matrix of order $n$. In particular, constant mass, damping and diffusion matrices and a quadratic potential energy lead to linear stochastic Hamiltonian systems (robust stabilization and control problems with mean square optimality criteria for such systems and their interconnections are considered, for example,  in \cite{VP_2018,VP_2020}).

For a twice continuously differentiable test function $\varphi \in C^2(S \x \mR^n, \mR)$, with the gradient vector and Hessian matrix
$$
    \varphi'
    =
    \d_x\varphi
    =
    {\begin{bmatrix}
        \d_q \varphi \\
        \d_p \varphi
    \end{bmatrix}},
    \qquad
    \varphi''
    =
    \d_x^2\varphi
    =
    {\begin{bmatrix}
        \d_q^2 \varphi & \d_p\d_q \varphi  \\
        \d_q\d_p \varphi & \d_p^2 \varphi
    \end{bmatrix}}
$$
(where $\d_q\d_p \varphi:= (\d_{q_k}\d_{p_j} \varphi)_{1\< j,k\< n} = (\d_p\d_q \varphi)^\rT$),
the infinitesimal generator $\cL$ of the diffusion process $X$ in (\ref{dX}) acts as
\begin{align}
\nonumber
  \cL(\varphi)
  & =
  -
  H'^\rT
    \left(
        J
        +
        \begin{bmatrix}
          0 & 0 \\
          0 & F
        \end{bmatrix}
    \right)
    \varphi'
    +
    \frac{1}{2}
    \Bra
        \begin{bmatrix}
          0 & 0 \\
          0 & D
        \end{bmatrix},
                \varphi''
    \Ket\\
\label{cL}
    & =
    \underbrace{-\{H,\varphi\}}_{\{\varphi,H\}}
    -
    p^\rT M^{-1}F\d_p\varphi
    +
    \frac{1}{2}
    \bra
        D,
        \d_p^2 \varphi
    \ket.
\end{align}
Here, $\bra K, N\ket:= \Tr (K^\rT N)$ is the Frobenius inner product of real matrices \cite{HJ_2007}, and   use is made of the Poisson bracket  \cite{A_1989}
\begin{equation}
\label{Poiss}
    \{u, v\}
    :=
    u'^{\rT} J v'
    =
    \d_q u^{\rT}\d_p v - \d_p u^{\rT}\d_q v
    =
    -\{v,u\}
\end{equation}
for continuously differentiable functions $u, v \in C^1(S \x \mR^n, \mR)$ on the phase space,    with the rightmost equality in (\ref{Poiss}) 
following from the antisymmetry $J=-J^\rT$ in (\ref{J}).

Since the system involves damping and is subject to external forcing, the Hamiltonian $H(X)$  is no longer an  integral of motion. As a twice continuously differentiable function evaluated at the process (\ref{XQP}), it  has the stochastic differential
\begin{align}
\nonumber
    \rd H
    & =
    H'^{\rT}\rd X
    +
    \frac{1}{2} \bra \d_p^2 H, D\ket\rd t\\
\nonumber
    & =
    H'^{\rT}
        \left(
        J
        -
        \begin{bmatrix}
          0 & 0 \\
          0 & F
        \end{bmatrix}
    \right)
    H'
        \rd  t
        +
        \d_p H(X)^{\rT}
        \sqrt{D}\rd W
        +
        \frac{1}{2} \bra \d_p^2 H, D\ket\rd t\\
\label{dH}
        & =
        \underbrace{\Big(\frac{1}{2}\bra M^{-1}, D\ket-\|\dot{Q}\|_F^2\Big)}_{\cL(H)}\rd t
        +
        \dot{Q}^\rT
        \sqrt{D}\rd W,
\end{align}
obtained from (\ref{dX}) by using the Ito lemma \cite{KS_1991}. Evaluation of the generator (\ref{cL}) at the Hamiltonian in the drift term of (\ref{dH}) employs the identity
\begin{equation}
\label{HH0}
    H'^\rT J H' = \{H,H\} = 0
\end{equation}
(which secures the preservation of $H$ in the case of isolated conservative system dynamics due to the antisymmetry 
of the Poisson bracket (\ref{Poiss})).
The functions $F$, $M$, $D$ in (\ref{dH}) are evaluated at the position process $Q$, and $\d_p^2 H = M^{-1}$ is the Hessian matrix of the Hamiltonian with respect to the momenta in view of (\ref{T}), (\ref{HTV}). The Ito correction term $\frac{1}{2}\bra M^{-1}, D\ket \> 0$ contributes to the energy gain for the system, while  $-\|\dot{Q}\|_F^2 \< 0$ has an opposite effect of energy dissipation due to damping.
Since the martingale part $\dot{Q}^{\rT}
        \sqrt{D}\rd W$ of the SDE (\ref{dH}) does not influence the time derivative of the mean value of $H$, then
\begin{align}
\nonumber
    (\bE H)^{^\centerdot}
    & =
    \bE \cL(H)
    = \bE \Big(\frac{1}{2}\bra M^{-1}, D\ket-\|\dot{Q}\|_F^2\Big)\\
\nonumber
    & =
    \bE \Big(\frac{1}{2}\bra M^{-1}, D\ket-\bE(\|P\|_{M^{-1}FM^{-1}}^2\mid Q)\Big)\\
\label{EHdot}
    &  =
    \bE \Bra M^{-1},  \frac{1}{2}D-\bE(PP^{\rT}\mid Q)M^{-1} F\Ket.
\end{align}
Here, $\bE(\cdot)$ is expectation, and  the tower property \cite{S_1996}  of iterated conditional expectations $\bE(\cdot \mid \cdot)$ is used along with the matrix of conditional second-order moments of $P$,  given $Q$:
\begin{equation}
\label{EPPQ}
    \bE(PP^{\rT}\mid Q)
    =
    \bE(P\mid Q)\bE(P\mid Q)^\rT + \cov(P\mid Q),
\end{equation}
where $\bE(P\mid Q)$ and  $\cov(P\mid Q)$ are the corresponding    conditional mean vector and covariance matrix.  A similar equation for the mean value of the kinetic energy (\ref{T}) is obtained from (\ref{HTV}), (\ref{EHdot}) as
\begin{align}
\nonumber
    (\bE T)^{^\centerdot}
    & =
    (\bE H)^{^\centerdot} - (\bE V)^{^\centerdot}
    =
    (\bE H)^{^\centerdot} - \bE (V'^\rT \dot{Q})\\
\nonumber
    & =
    (\bE H)^{^\centerdot} - \bE (V'^\rT M^{-1}\bE(P\mid Q))
    =
    (\bE H)^{^\centerdot} - \bE \bra M^{-1}, \bE(P\mid Q) V'^\rT\ket\\
\label{ETdot}
    &  =
    \bE \Bra M^{-1},  \frac{1}{2}D-\bE(PP^{\rT}\mid Q)M^{-1} F- \bE(P\mid Q) V'^\rT\Ket.
\end{align}

In the limiting case of a deterministic Hamiltonian system (with dissipation) isolated from the external random force, when $ D=0$ and the SDEs (\ref{SH2}), (\ref{dX}), (\ref{dH}) are ODEs,  the relations (\ref{EHdot}), (\ref{ETdot}) reduce to
\begin{equation}
\label{HVTdot}
    \dot{H} = -\|M^{-1}P\|_F^2 \< 0,
    \quad
    \dot{V} = V'^\rT M^{-1}P,
    \quad
    \dot{T} = -\|M^{-1}P\|_F^2 - \dot{V}.
\end{equation}
In this case, $\dot{H} = 0$  whenever $P=0$, and hence, the Hamiltonian $H$ cannot be directly employed as a strict Lyapunov function for the convergence of the phase trajectories $X$ to the set
\begin{equation}
\label{Hmin}
    \Argmin_{x\in S \x \mR^n}
    H(x)
    =
    \Argmin_{q\in S}V(q)\x \{0\}
\end{equation}
(assuming that the potential energy $V$ achieves its minimum in the position space  $S$). Nevertheless, at any such moment of time (when $P=0$), $H$ has zero second-order  time derivative $\ddot{H}=0$,  and, since the Langevin damping force in (\ref{SH2}) with $D=0$ and the centrifugal term in (\ref{dHdq}) vanish, then $\dot{P} = -V'$, thus leading,    in view of (\ref{HVTdot}),   to  the negative third-order time derivative
\begin{equation}
\label{H...}
    \dddot{H} = -2\|M^{-1} V'\|_F^2 <0
\end{equation}
unless the current position $Q$ of the system is a stationary point of $V$. Hence, any instant $t_0>0$, such that  $V'(Q(t_0))\ne 0$ and $P(t_0)=0$,  is a stationary point of inflection for the Hamiltonian $H$ as a function of time along the trajectory of the system.  Furthermore,     in view of (\ref{HVTdot}), the potential and kinetic energies experience strict local maximum and minimum, respectively,  at any such moment of time $t_0$  since
\begin{align}
\label{V0}
    \dot{V}
    & = 0,
    \qquad
    \ddot{V} = -\|V'\|_{M^{-1}}^2 < 0,\\
\label{T0}
    T & = 0,
    \qquad
    \dot{T} = 0,
    \qquad
    \ddot{T} = -  \ddot{V} > 0,
\end{align}
which can be interpreted as a local  exchange between these two types of energy (with the energy  dissipation vanishing together with the damping force). This is understood   at the level of the first and second-order terms of the Taylor series expansions over a short time interval containing $t_0$. 
Therefore, if $V$ has a unique stationary point $q_* \in S$ (in which case,  $\Argmin_{q\in S}V(q) = \{q_*\}$ in (\ref{Hmin})), then the Hamiltonian is strictly decreasing in time  until the deterministic system  reaches the unique equilibrium point  $(q_*, 0)$ in the phase space.

Now, returning to the stochastic setting with $D\succ 0$ (where the equilibrium is organised as an invariant measure rather than a point in the  phase space), we note that
in comparison with the energy balance equations (\ref{dH}), (\ref{EHdot}), (\ref{ETdot}), which involve the first and second-order conditional  moments of the momentum,   a more detailed information on statistical properties of the system   is provided by the probability distribution of its state process.

\section{Position-Momentum PDF Dynamics and Conditioning}
\label{sec:PDF}

We assume that for any time $t\> 0$,  the probability distribution of the state vector $X(t)$ in (\ref{XQP}) is absolutely continuous with density $f(t,\cdot): S\x \mR^n \to \mR_+$ (with respect to the $2n$-dimensional  Lebesgue measure $\lambda_n\x \lambda_n = \lambda_{2n}$ on the phase space). It is also assumed  that $f$ is continuously differentiable in time and twice continuously differentiable in space. These assumptions can be justified 
by using the parabolic H\"{o}rmander condition \cite{H_1967,S_2008},  whose discussion is beyond the scope of this paper. 
We will now specify the FPKE for the PDF $f$ (the lemma below is given with a proof only for completeness and in order to provide additional notation for what follows).
\begin{lem}
\label{lem:FPKE}
The position-momentum PDF $f$ for the stochastic Hamiltonian system (\ref{SH1}), (\ref{SH2}), which is assumed to satisfy $f \in C^{1,2}(\mR_+\x S\x\mR^n, \mR_+)$, is governed by the FPKE
\begin{equation}
\label{FPKE}
    \d_t f
    =
    \{H,f\}
    +
    p^\rT M^{-1} F \d_pf
    +
    \bra
        F, M^{-1}
    \ket
    f
    +
    \frac{1}{2}
    \bra
        D, \d_p^2 f
    \ket.
\end{equation}
\end{lem}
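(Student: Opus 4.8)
The plan is to derive the FPKE as the formal adjoint of the generator $\cL$ from \eqref{cL}. Since $X$ in \eqref{dX} is a diffusion with generator $\cL$, the PDF $f$ satisfies the Kolmogorov forward equation $\d_t f = \cL^{\dagger}(f)$, where $\cL^{\dagger}$ is the adjoint of $\cL$ with respect to the $L^2(S\x\mR^n,\lambda_{2n})$ pairing. Concretely, I would verify that for every test function $\varphi \in C_c^2(S\x\mR^n,\mR)$ we have $\int \varphi\, \d_t f\, \rd\lambda_{2n} = \int \cL(\varphi)\, f\, \rd\lambda_{2n}$, and then integrate by parts on the right-hand side to move all derivatives off $\varphi$ and onto $f$ (and the coefficients $H$, $F$, $D$, $M$), the boundary terms vanishing because $\varphi$ has compact support (on the torus factors there is no boundary at all, so periodicity suffices). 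The terms to handle are the three pieces of \eqref{cL}: the Poisson-bracket term $\{\varphi,H\} = \d_q\varphi^\rT \d_p H - \d_p\varphi^\rT \d_q H$, the damping term $-p^\rT M^{-1} F\, \d_p\varphi$, and the diffusion term $\tfrac12 \bra D,\d_p^2\varphi\ket$.

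The key computation is term-by-term. For the Liouville/Poisson part, integrating by parts gives $\int \{\varphi,H\} f = -\int \varphi\, \{f,H\} = \int \varphi\,\{H,f\}$, where the crucial point is that the "divergence" of the Hamiltonian vector field $J H'$ vanishes: $\d_q^\rT \d_p H - \d_p^\rT \d_q H = 0$ because $\d_{q_k}\d_{p_k}H = \d_{p_k}\d_{q_k}H$ and, more transparently, because $\d_p H = M^{-1}p$ is independent of $p$-divergence in the relevant slot while $\d_q H$ does not depend on $q$ through the pairing that matters — so no extra zeroth-order term is produced; this yields the $\{H,f\}$ term in \eqref{FPKE}. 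For the damping term, $-\int (p^\rT M^{-1} F\,\d_p\varphi) f = \int \varphi\, \d_p^\rT\big((M^{-1}F)^\rT p\, f\big) = \int \varphi\big(\bra F,M^{-1}\ket f + p^\rT M^{-1}F\,\d_p f\big)$, since $\d_p^\rT(M^{-1}F p) = \Tr(M^{-1}F) = \bra F,M^{-1}\ket$ (using symmetry of $M^{-1}$ and $F$) and $M$, $F$ depend only on $q$; this produces both the $\bra F,M^{-1}\ket f$ and $p^\rT M^{-1}F\d_p f$ terms. For the diffusion term, two integrations by parts in $p$ (with $D=D(q)$ not depending on $p$) give $\tfrac12\int \bra D,\d_p^2\varphi\ket f = \tfrac12 \int \varphi\, \bra D,\d_p^2 f\ket$, the last term in \eqref{FPKE}.

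Assembling the three contributions and using that $\varphi$ is arbitrary yields \eqref{FPKE}. The main obstacle — really the only subtlety beyond bookkeeping — is the justification that the boundary/decay terms vanish when integrating by parts in the $p\in\mR^n$ directions: this requires the tacit regularity and decay hypotheses on $f$ already posited before the lemma (and the compact support of $\varphi$ takes care of the $\varphi$ side), so under the standing assumption $f\in C^{1,2}$ together with the integrability needed for $f$ to be a PDF, these terms drop. A secondary point worth stating carefully is the cancellation of the divergence of the symplectic drift, which is what keeps the Liouville term free of a spurious $(\div)f$ contribution; it follows directly from the constancy of $J$ in \eqref{J} and the mixed-partials identity, exactly as \eqref{HH0} reflects the antisymmetry of $J$. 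Everything else is the routine adjoint calculation sketched above.
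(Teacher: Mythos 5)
Your proof is correct and takes essentially the same route as the paper: the paper writes $\d_t f=\cL^\dagger(f)$ directly in divergence form and simplifies the three divergence terms, whereas you verify the same adjoint identity weakly against compactly supported test functions and integrate by parts, using the identical key facts (the divergence-free symplectic drift $\div(JH')=0$ via equality of mixed partials, $\div_p(FM^{-1}p)=\bra F,M^{-1}\ket$ by symmetry of $F$, $M$, and the $q$-only dependence of $M$, $F$, $D$). The only blemish is the muddled ``more transparently'' clause about why $\div(JH')=0$; the mixed-partials identity you state first is the correct (and the paper's) justification, so nothing essential is missing.
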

\begin{proof}
In application to the diffusion process $X$ in  (\ref{dX}),  the FPKE for its PDF $f$ takes the form
\begin{equation}
\label{fdot}
    \d_t f
    =
    \cL^\dagger(f)=
    -\div
    \left(
    f
    \left(
        J
        -
        \begin{bmatrix}
          0 & 0 \\
          0 & F
        \end{bmatrix}
    \right)
    H'
    \right)
    +
    \frac{1}{2}
    \div^2
    \left(
        f
        \begin{bmatrix}
          0 & 0\\
          0 & D
        \end{bmatrix}
    \right),
\end{equation}
where $\cL^\dagger$ is the formal adjoint of the generator (\ref{cL}) (see, for example, \cite{S_2008}). Here, the divergence operator $\div(\cdot)$  in the phase space acts as
\begin{equation}
\label{divuv}
    \div
    \begin{bmatrix}
      u\\
      v
    \end{bmatrix}
    =
    \div_q u + \div_p v
\end{equation}
for $u,v \in C^1(S\x \mR^n, \mR^n)$ in accordance with (\ref{xqp}), where $\div_q(\cdot)$, $\div_p(\cdot)$ are the divergence operators over the position and momentum variables.   This operator is extended row-wise  to matrix-valued maps $w:=(w_{jk})_{1\< j\< r, 1\< k \< 2n} := \begin{bmatrix}
u & v
\end{bmatrix}
\in C^1(S\x \mR^n, \mR^{r\x 2n})$,   consisting of $u, v \in C^1(S\x \mR^n, \mR^{r\x n})$,     and yields an $\mR^r$-valued map
\begin{equation}
\label{divw}
    \div
    w
    :=
    \Big(
        \sum_{k=1}^{2n}\d_{x_k}w_{jk}
    \Big)_{1\< j\< r}
    =
    \div_q u + \div_p v,
\end{equation}
which coincides with the column-wise application of $\div (\cdot)$  if $w$ takes values in the subspace $\mS_{2n}$ (in which case, $r=2n$  and     $\div
    w
    =
    \big(
        \sum_{j=1}^{2n}\d_{x_j}w_{jk}
    \big)_{1\< k\< 2n}$ is $\mR^{2n}$-valued). 
Accordingly,
\begin{equation}
\label{divfJH}
    \div(fJH')
    =
    f'^\rT JH' + f\div (JH')
    =
    \{f,H\}
    =
    -
    \{H,f\},
\end{equation}
where the Poisson bracket (\ref{Poiss}) is used along with
the property (employed in Liouville's theorem 
\cite{A_1989}) that the Hamiltonian vector field $JH'$ is divergenceless ($\div(JH') = \div_q \d_p H - \div_p \d_q H= \Tr (\d_q \d_p H-\d_p \d_q H )=0$ in view of (\ref{divuv}) and the invariance of the trace under the matrix transposition).  Similar calculations lead to
\begin{align}
\nonumber
    \div
    \left(
        f
        \begin{bmatrix}
          0 & 0 \\
          0 & F
        \end{bmatrix}
        H'
    \right)
    & =
    \div
        \begin{bmatrix}
          0 \\
          fFM^{-1}p
        \end{bmatrix}
        =
    \div_p (fFM^{-1}p)\\
\nonumber
    & =
    p^\rT M^{-1} F  \d_p f
    +
    f\div_p(FM^{-1}p)\\
\label{divfFH}
    & =
    p^\rT M^{-1} F  \d_p f
    +
    \bra F, M^{-1}\ket f
\end{align}
in view of the mass and damping matrices $M$, $F$ being symmetric and depending only on the position vector $q$. Also,
$
    \div
    \left(
        f
        {\small\begin{bmatrix}
          0 & 0\\
          0 & D
        \end{bmatrix}}
    \right)
    =
        {\small\begin{bmatrix}
          0 & 0\\
          0 & D
        \end{bmatrix}}
        f'
        +
        f
        {\small\begin{bmatrix}
          0 \\
          \div_p D
        \end{bmatrix}}
        =
        {\small\begin{bmatrix}
          0 \\
          D \d_p f
        \end{bmatrix} }
$
in accordance with (\ref{divw}) and since the diffusion matrix $D$ depends  only on $q$,  and hence,
\begin{equation}
\label{div2}
    \div^2
    \left(
        f
        \begin{bmatrix}
          0 & 0\\
          0 & D
        \end{bmatrix}
    \right)
        =
        \div_p(D \d_p f)
        =
        \bra D, \d_p^2 f\ket.
\end{equation}
Substitution of (\ref{divuv})--(\ref{div2}) into (\ref{fdot}) allows the FPKE  to be represented in the form  (\ref{FPKE}). Note that the first terms on the right-hand sides of (\ref{cL}), (\ref{FPKE}) are related to each other by the formal skew self-adjointness property  of the Poisson bracket $\{H, \cdot\} = -\{H, \cdot\}^\dagger$.
\end{proof}

At any time $t\> 0$, the position $Q(t)$ of the system has a PDF $g(t,\cdot): S \to \mR_+$ obtained by integrating the position-momentum PDF $f(t,\cdot)$   over the momentum variables:
\begin{equation}
\label{gf}
    g(t,q) = \int_{\mR^n} f(t,q,p)\rd p.
\end{equation}
For what follows, it is assumed that the position PDF $g$ is positive everywhere:
\begin{equation}
\label{gpos}
  g(t,q)>0,
  \qquad
  t\> 0,
  \
  q \in S.
\end{equation}
Then the conditional PDF $h(\cdot \mid q, t): \mR^n\to \mR_+$ of the momentum $P(t)$, given $Q(t)=q$,  is provided by the ratio
\begin{equation}
\label{hfg}
  h(p\mid q, t)
  =
  \frac{f(t,q,p)}{g(t,q)},
\end{equation}
and the position-momentum PDF $f(t,\cdot)$ is factorised as
\begin{equation}
\label{fgh}
    f(t,q,p) = g(t,q)h(p\mid q,t),
    \qquad
    t \> 0,\ q\in S,\
    p \in \mR^n.
\end{equation}
In order to simplify the discussion of regularity issues in what follows, we assume that $g$ inherits from $f$ the smoothness properties and so also does the  conditional momentum  mean from (\ref{EPPQ}):
\begin{align}
\nonumber
  \gamma(t,q)
  & :=
  \bE(P(t)\mid Q(t)=q)
  =
  \int_{\mR^n}
   h(p\mid q,t) p
  \rd p\\
\label{EPQ}
  & =
  \frac{1}{g(t,q)}
  \int_{\mR^n}
   f(t,q,p) p
  \rd p,
  \qquad
  t\> 0,\
  q \in S,
\end{align}
which  is expressed here in terms of the conditional momentum PDF $h$ and its representation (\ref{hfg}).

\begin{lem}
\label{lem:gdot}
Under the assumptions of Lemma~\ref{lem:FPKE}, the position PDF $g$ in (\ref{gf})  satisfies a linear first-order PDE
\begin{equation}
\label{gdot}
  \d_t g = - \div_q(g M^{-1}\gamma),
\end{equation}
which involves the conditional momentum  mean from (\ref{EPQ}).
\end{lem}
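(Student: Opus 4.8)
The plan is to derive (\ref{gdot}) by integrating the FPKE (\ref{FPKE}) over the momentum variable $p\in\mR^n$ and exploiting the structure of its right-hand side. First I would differentiate the defining relation (\ref{gf}) under the integral sign, $\d_t g(t,q) = \int_{\mR^n}\d_t f(t,q,p)\rd p$, which is legitimate under the standing smoothness and momentum-decay hypotheses on $f$ (the very assumptions that already make the conditional mean (\ref{EPQ}) well defined). Substituting (\ref{FPKE}) for $\d_t f$ then reduces the problem to integrating the four terms on its right-hand side.

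Three of these four terms are exact divergences in $p$ and therefore contribute nothing. Indeed, by (\ref{divfFH}) the sum $p^\rT M^{-1}F\d_p f + \bra F, M^{-1}\ket f$ equals $\div_p(fFM^{-1}p)$, and by (\ref{div2}) the term $\frac12\bra D,\d_p^2 f\ket$ equals $\frac12\div_p(D\d_p f)$; each integrates to zero over $\mR^n$ by the divergence theorem, the boundary contributions at $|p|\to\infty$ vanishing by the decay of $f$ and $\d_p f$ in $p$. For the remaining Poisson-bracket term I would use (\ref{divfJH}): since $JH' = \begin{bmatrix} M^{-1}p \\ -\d_q H\end{bmatrix}$ by (\ref{SH1}) and (\ref{J}), one has $\{H,f\} = -\div(fJH') = -\div_q(fM^{-1}p) + \div_p(f\d_q H)$. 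Upon integration over $p$, the $\div_p$-part again vanishes by the divergence theorem, while in the $\div_q$-part the operator $\div_q$ may be pulled out of the $p$-integral because $M(q)$ depends on $q$ only; invoking (\ref{EPQ}) in the form $g(t,q)\gamma(t,q) = \int_{\mR^n} f(t,q,p)p\rd p$ gives $-\div_q\!\big(M^{-1}\int_{\mR^n} f p\rd p\big) = -\div_q(gM^{-1}\gamma)$. Collecting the pieces yields exactly (\ref{gdot}).

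The only genuine obstacle is the justification of the two analytic manipulations — differentiation under the integral in $t$ and $q$, and the vanishing of the momentum-boundary terms in the divergence-theorem steps — which rest on mild decay of $f(t,q,\cdot)$ and $\d_p f(t,q,\cdot)$ at momentum infinity. These are consistent with, and implicitly subsumed by, the hypotheses already imposed for (\ref{gf})--(\ref{EPQ}) to make sense, so I would simply state them as such rather than dwell on them: the substantive content of the lemma is entirely carried by the divergence identities (\ref{divfJH}), (\ref{divfFH}), (\ref{div2}) established in the proof of Lemma~\ref{lem:FPKE}, together with the definition (\ref{EPQ}) of the conditional momentum mean $\gamma$.
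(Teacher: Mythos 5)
Your derivation is correct, and it is in fact the secondary route that the paper itself sketches at the end of its own proof (the chain (\ref{gdot1}), based on the identity (\ref{Poissdiv}) and the divergence forms (\ref{divfFH}), (\ref{div2})). The paper's primary argument is different: it never integrates the FPKE, but works directly from the position ODE (\ref{SH1}) in weak form, computing the time derivative of $\bE\,\varphi(Q(t))$ for smooth compactly supported test functions $\varphi$ on $S$ via the tower property of conditional expectations together with the definition (\ref{EPQ}) of $\gamma$, integrating by parts (Gauss--Ostrogradsky) in the position variable only, and then comparing with $\int_S\varphi\,\d_t g\,\rd q$ as in (\ref{EphiQdot1})--(\ref{EphiQdot2}); arbitrariness of $\varphi$ yields (\ref{gdot}). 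What that buys is that the only boundary issue lives in $q$ and is handled by the compact support of $\varphi$ (or periodicity on $\mT^n$), whereas your route needs the fluxes of $fFM^{-1}p$, $D\d_p f$ and $f\d_q H$ to vanish at momentum infinity. The paper states this explicitly as a sufficient decay condition, $o(|p|^{1-n})$ as $|p|\to+\infty$, and it is a genuine additional hypothesis rather than something automatically subsumed by the $C^{1,2}$ smoothness assumed in Lemma~\ref{lem:FPKE}, so you should state it as an assumption rather than treat it as implicit. Apart from that caveat, your term-by-term computation (including $\{H,f\}=-\div_q(fM^{-1}p)+\div_p(f\,\d_q H)$ and the identification $\int_{\mR^n}fp\,\rd p=g\gamma$) matches the paper exactly.
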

\begin{proof}
From (\ref{SH1}), it follows that
for any infinitely differentiable  function  $\varphi \in C^\infty(S,\mR)$ of bounded support (the latter condition is not needed in the case of the $n$-dimensional torus $S = \mT^n$),
\begin{align}
\nonumber
    (\bE \varphi(Q))^{^\centerdot}
    & =
    \bE (\varphi'(Q)^\rT \dot{Q})
    =
    \bE (\varphi'(Q)^\rT M(Q)^{-1}P)    \\
\nonumber
    & =
    \bE (\varphi'(Q)^\rT M(Q)^{-1} \bE (P\mid Q))
    =
    \int_S
    \varphi'(q)^\rT M(q)^{-1} \gamma(t,q)g(t,q)\rd q\\
\label{EphiQdot1}
    & =
    -
    \int_S
    \varphi(q)
    \div_q(g(t,q) M(q)^{-1} \gamma(t,q))\rd q.
\end{align}
Here, the tower property of conditional expectations is used along with (\ref{EPQ}), the identity $$\varphi'^\rT M^{-1} \gamma g = \div_q(\varphi M^{-1} \gamma g) - \varphi \div_q(M^{-1} \gamma g) $$ and the Gauss-Ostrogradsky theorem (applied to a subset of $S$ large enough to  contain the support of $\varphi$). On the other hand,
\begin{equation}
\label{EphiQdot2}
    (\bE \varphi(Q))^{^\centerdot}
    =
    \d_t
    \int_S
    \varphi(q) g(t,q)\rd q
    =
    \int_S
    \varphi(q)\d_t g(t,q)\rd q.
\end{equation}
Since the test function $\varphi$ is arbitrary,   a comparison of (\ref{EphiQdot1}) with (\ref{EphiQdot2}) leads to (\ref{gdot}). Alternatively, the PDE (\ref{gdot}) can also be established by integrating  both sides of the FPKE (\ref{FPKE}) over the momentum variables (according to (\ref{gf})) and using the divergence forms of (\ref{divfFH}), (\ref{div2}):
\begin{align}
\nonumber
    \d_t g
    & =
    \int_{\mR^n}
    \d_t f\rd p
    =
    \int_{\mR^n}
    \Big(
    \{H,f\}
    +
    \div_p
    \Big(
        f FM^{-1}p + \frac{1}{2} D\d_p f
    \Big)
    \Big)
    \rd p\\
\nonumber
    & = \int_{\mR^n}
    (
        \div_p(f\d_q H) - \div_q(f \d_p H)
    )
    \rd p
    =
    -\div_q
    \int_{\mR^n}
        f \d_p H
    \rd p\\
\label{gdot1}
    & =
    -\div_q
    \Big(
    g
    M^{-1}
    \int_{\mR^n}
        h p
    \rd p
    \Big)
    =
        -\div_q
    (
    g
    M^{-1}
    \gamma
    ),
\end{align}
where  the identity
\begin{equation}
\label{Poissdiv}
    \{u, v\}
    =
    \div_p(v\d_q u) - \div_q(v \d_p u)
\end{equation}
for the Poisson bracket (\ref{Poiss}) of functions $u \in C^2(S\x \mR^n, \mR)$ and $v \in C^1(S\x \mR^n, \mR)$ is applied to $u:= H$ and $v:= f(t,\cdot)$; cf. (\ref{divfJH}).  However, in accordance with the Gauss-Ostrogradsky  theorem, in order to justify the absence of contribution from the divergence terms $\div_p(\cdot)$  to the integrals in (\ref{gdot1}), an appropriately fast  decay of the corresponding vector fields at infinity  (more precisely, $o(|p|^{1-n})$, as $|p|\to +\infty$),  ensuring their flux decay, is needed as a sufficient condition.
\end{proof}

The techniques employed in (\ref{EphiQdot1})  are similar to those in A.Y.Klimenko's conditional moment closure   approach in fluid mechanics and combustion theory  \cite{KB_1999} and its adaptations to multiscale transport phenomena \cite{KA_2007,VK_2010}.
According to the PDE (\ref{gdot}), the evolution of the position PDF $g$ is driven by the conditional momentum average $\gamma$ in  (\ref{EPQ}) in the sense that if $\gamma=0$ everywhere in $\mR_+\x S$ then $g$ does not change in time.  Also, the PDE (\ref{gdot}) is not autonomous since its right-hand side  involves dependence (through $\gamma$) on the conditional momentum PDF $h$. The latter evolves as follows.

\begin{lem}
\label{lem:hdot}
Under the assumptions of Lemma~\ref{lem:FPKE} together with (\ref{gpos}), the conditional momentum PDF $h$ in (\ref{hfg})  satisfies the PDE
\begin{align}
\nonumber
    \d_t h
    =&
    \{H,h\}
    +
    p^\rT M^{-1} F \d_ph
    +
    \frac{1}{2}
    \bra
        D, \d_p^2 h
    \ket\\
\label{hdot}
     & +
     (    \bra
        F, M^{-1}
    \ket
    -\varpi^\rT M^{-1}\d_q \ln g
    + \div_q (M^{-1}\gamma))
     h,
\end{align}
where $\gamma$ is the conditional momentum  mean from (\ref{EPQ}). Here, the function $\varpi: \mR_+ \x S\x \mR^n \to \mR^n$ is given by
\begin{equation}
\label{varpi}
  \varpi(t,q,p):= p-\gamma(t,q),
  \qquad
  t\> 0, \
  q \in S,\
  p \in \mR^n
\end{equation}
and describes the conditionally centered momentum vector.
\end{lem}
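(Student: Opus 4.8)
The plan is to differentiate the ratio $h = f/g$ from (\ref{hfg}) in time and substitute the FPKE (\ref{FPKE}) for $\d_t f$ together with the transport equation (\ref{gdot}) for $\d_t g$. By the quotient rule,
\begin{equation*}
  \d_t h = \frac{\d_t f}{g} - h\,\frac{\d_t g}{g},
\end{equation*}
so the task reduces to rewriting both terms on the right-hand side in terms of $g$, $h$, $\gamma$ and their spatial derivatives, using throughout the factorisation $f = gh$ in (\ref{fgh}).

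First I would process $\d_t f/g$. Since $g$ in (\ref{gf}) depends only on $(t,q)$, one has $\d_p f = g\,\d_p h$ and $\d_p^2 f = g\,\d_p^2 h$, so division by $g$ turns the three non-bracket terms of (\ref{FPKE}) into $p^\rT M^{-1}F\d_p h + \bra F, M^{-1}\ket h + \tfrac12\bra D,\d_p^2 h\ket$ directly. For the Poisson-bracket term, using $\d_q f = h\,\d_q g + g\,\d_q h$ together with the splitting (\ref{Poiss}) gives
\begin{equation*}
  \{H,f\} = g\{H,h\} - h\,(\d_p H)^\rT \d_q g = g\{H,h\} - h\, p^\rT M^{-1}\d_q g,
\end{equation*}
because $\d_p H = M^{-1}p$ and $M^{-1}\in\mS_n$; division by $g$ then yields $\{H,h\} - h\,p^\rT M^{-1}\d_q\ln g$.

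Next I would process $-h\,\d_t g/g$. Expanding the divergence in (\ref{gdot}) by the product rule, $\div_q(gM^{-1}\gamma) = (\d_q g)^\rT M^{-1}\gamma + g\,\div_q(M^{-1}\gamma)$, so that $\d_t g/g = -\gamma^\rT M^{-1}\d_q\ln g - \div_q(M^{-1}\gamma)$, hence $-h\,\d_t g/g = h\gamma^\rT M^{-1}\d_q\ln g + h\,\div_q(M^{-1}\gamma)$. Adding the two contributions, the terms proportional to $\d_q\ln g$ combine into $-h(p-\gamma)^\rT M^{-1}\d_q\ln g = -h\,\varpi^\rT M^{-1}\d_q\ln g$ with $\varpi$ as in (\ref{varpi}), and collecting the surviving terms reproduces precisely (\ref{hdot}).

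This is essentially a bookkeeping computation; the only point requiring a little care is keeping track of which quantities depend on the momentum $p$ (namely $f$, $h$, $H$ and $\varpi$) and which do not ($g$, $M$, $F$, $D$ and $\gamma$), so that the product and quotient rules and the divergence expansions are applied to the correct arguments and the symmetry $M^{-1}\in\mS_n$ is used where needed. There is no genuine analytical obstacle beyond the standing smoothness and positivity hypotheses of Lemma~\ref{lem:FPKE} and (\ref{gpos}), which ensure that $h = f/g$ is well defined and sufficiently regular to justify the differentiations above.
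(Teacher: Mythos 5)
Your proposal is correct and follows essentially the same route as the paper: differentiate $h=f/g$ in time, substitute the FPKE for $\d_t f$ and the transport equation (\ref{gdot}) for $\d_t g$ (equivalently $\d_t\ln g$), and combine the two $\d_q\ln g$ terms into $-\varpi^\rT M^{-1}\d_q\ln g$. The only cosmetic difference is that you expand $\{H,gh\}$ via the gradients of $f=gh$, whereas the paper invokes the derivation property of the Poisson bracket and the identity $\{H,\ln g\}=-p^\rT M^{-1}\d_q\ln g$; these are the same computation.
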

\begin{proof}
By using the identity
$$
    \div_q(gM^{-1}\gamma) = \gamma^\rT M^{-1}\d_q g + g \div_q (M^{-1}\gamma)
$$ along with
the condition (\ref{gpos}), it follows from (\ref{gdot}) that  the function $\ln g$ also satisfies a linear first-order PDE:
\begin{equation}
\label{gdot3}
  \d_t \ln g
  =
  \frac{1}{g}\d_t g
  =
  - \gamma^\rT M^{-1}\d_q \ln g - \div_q (M^{-1}\gamma).
\end{equation}
The time differentiation of (\ref{hfg}), combined with
(\ref{gdot3}), the FPKE (\ref{FPKE}) and the fact that the position PDF $g$ does not depend on the momentum variables,   leads to
\begin{align}
\nonumber
    \d_t h
    =&
    \frac{1}{g} \d_t f - h\d_t \ln g\\
\nonumber
    =&
    \frac{1}{g}
        \{H,gh\}
    +
    p^\rT M^{-1} F \d_ph
    +
    \frac{1}{2}
    \bra
        D, \d_p^2 h
    \ket
    +
    (
    \bra
        F, M^{-1}
    \ket
    -\d_t \ln g
    )
    h
\\
\nonumber
    =&
        \{H,h\}
    +
    p^\rT M^{-1} F \d_ph
    +
    \frac{1}{2}
    \bra
        D, \d_p^2 h
    \ket
    +
    (
    \bra
        F, M^{-1}
    \ket
    +
    \{H, \ln g\}
    -\d_t \ln g
    )
    h
\\
\nonumber
    =&
    \{H,h\}
    +
    p^\rT M^{-1} F \d_ph
    +
    \frac{1}{2}
    \bra
        D, \d_p^2 h
    \ket\\
\label{hdot1}
     & +
     (    \bra
        F, M^{-1}
    \ket
    +
    \{H,\ln g\} + \gamma^\rT M^{-1}\d_q \ln g + \div_q (M^{-1}\gamma))
     h.
\end{align}
 Here, use is also made of the derivation property $\{u,vw\} =\{u,v\}w+v\{u,w\}$ of the Poisson bracket (\ref{Poiss}), whereby 
  $\frac{1}{g}\{H, gh\} = \{H, h\} + \{H, \ln g\} h$ in view of the identity $ \frac{1}{g} \{H, g\} = \{H, \ln g\}$. Also,
\begin{equation}
\label{Hlng}
    \{H,\ln g\}
    =
    \d_q H^{\rT}\underbrace{\d_p \ln g}_{0} - \d_p H^{\rT}\d_q \ln g
     =
     -p^\rT M^{-1}\d_q \ln g,
\end{equation}
where the $p$-independence of the position PDF $g$ is used again. Now, substitution of (\ref{Hlng}) into (\ref{hdot1}) leads to
\begin{align}
\nonumber
    \d_t h
    =&
    \{H,h\}
    +
    p^\rT M^{-1} F \d_ph
    +
    \frac{1}{2}
    \bra
        D, \d_p^2 h
    \ket\\
\label{hdot2}
     & +
     (    \bra
        F, M^{-1}
    \ket
    -(p-\gamma)^\rT M^{-1}\d_q \ln g
    + \div_q (M^{-1}\gamma))
     h,
\end{align}
which,  in view of (\ref{varpi}), establishes (\ref{hdot}).
\end{proof}

For completeness,   we also note that by using (\ref{Poissdiv}) and representing the relevant terms of the PDE (\ref{hdot}) (or (\ref{hdot2})) in the divergence form, it follows that
\begin{align}
\nonumber
    \d_t h
    =&
    \overbrace{\div_p(h\d_q H) - \div_q(h M^{-1}p)}^{\{H,h\}}
    +
    \div_p
        \Big(
        h FM^{-1}p + \frac{1}{2} D\d_p h
    \Big)\\
\label{hdot3}
     & +
     (
     \div_q (M^{-1}\gamma)
    -\varpi^\rT M^{-1}\d_q \ln g
    )
     h,
\end{align}
where the structure of the right-hand side is in accordance with the preservation of the normalization property
\begin{equation}
\label{hnorm}
    \int_{\mR^n} h(p\mid q, t)\rd p = 1,
    \qquad
    q\in S,
\end{equation}
over the course of time $t\> 0$. Indeed,   the $\div_p(\cdot)$ terms in (\ref{hdot3}) do not contribute to the integral over the momentum variables, which, together with the $p$-independence of $M$, $\gamma$, $g$,  yields
\begin{align*}
    &\d_t \int_{\mR^n}
    h
    \rd p
    =
    \int_{\mR^n}
    (
    - \div_q(h M^{-1}p)
    +
     (
     \div_q (M^{-1}\gamma)
    -\varpi^\rT M^{-1}\d_q \ln g
    )h
    )\rd p\\
     =&
    -\div_q
    \Big(
    M^{-1}
    \underbrace{\int_{\mR^n}
    h p \rd p}_{\gamma}
    \Big)
    \!+\!
    \div_q (M^{-1}\gamma)
    \underbrace{\int_{\mR^n}
    h \rd p}_{1}
    \!-\!
    \underbrace{\int_{\mR^n}
    h\varpi\rd p}_{0}
    M^{-1}\d_q \ln g = 0,
\end{align*}
where the intermediate integrals (including  $\int_{\mR^n}
    h\varpi\rd p = \int_{\mR^n}
    hp\rd p - \gamma \int_{\mR^n}
    h\rd p = 0$) use (\ref{EPQ}), (\ref{varpi})  and (\ref{hnorm}).

The presence of the conditional momentum mean $\gamma$ (which depends linearly on $h$ according to (\ref{EPQ}))   as a factor on the right-hand side of (\ref{hdot})  makes the latter PDE an integro-differential equation with a quadratic nonlinearity with respect to $h$.  Furthermore, (\ref{hdot}) involves  the position PDF $g$ and is thus coupled to the PDE (\ref{gdot}), which,  in turn, is affected by $\gamma$ as mentioned before. This coupling is  illustrated in Figure~\ref{fig:couple}.
\begin{figure}[htbp]
\centering
\unitlength=1.1mm
\linethickness{0.4pt}
\begin{picture}(90.00,40.00)
    \put(20,31){\framebox(10,10)[cc]{(\ref{gdot})}}
    \put(25,31){\vector(0,-1){10}}
    \put(29,26){\makebox(0,0)[cc]{{\small$\d_t g$}}}

    \put(25,3){\makebox(0,0)[cc]{{\small$g$}}}
    \put(65,39){\makebox(0,0)[cc]{{\small$h$}}}
    \put(35,39){\makebox(0,0)[cc]{{\small$\gamma$}}}
    \put(20,11){\framebox(10,10)[cc]{\small$\int(\cdot)  \rd t$}}
    \put(25,11){\line(0,-1){5}}
    \put(15,6){\vector(1,0){45}}

    \put(15,6){\line(0,1){30}}
    \put(15,36){\vector(1,0){5}}
    \put(40,36){\vector(-1,0){10}}
    \put(40,31){\framebox(10,10)[cc]{\small$\int(\cdot)  p\rd p$}}
    \put(75,36){\vector(-1,0){25}}
    \put(65,36){\line(0,-1){5}}
    \put(60,21){\framebox(10,10)[cc]{\small$\int(\cdot) \rd t$}}
    \put(65,11){\vector(0,1){10}}
    \put(60,1){\framebox(10,10)[cc]{(\ref{hdot})}}
    \put(75,36){\line(0,-1){30}}
    \put(75,6){\vector(-1,0){5}}
    \put(61,16){\makebox(0,0)[cc]{{\small$\d_t h$}}}
\end{picture}
\caption{An informal block-diagram illustration of the coupling between the PDEs (\ref{gdot}),  (\ref{hdot}).
}
\label{fig:couple}
\end{figure}
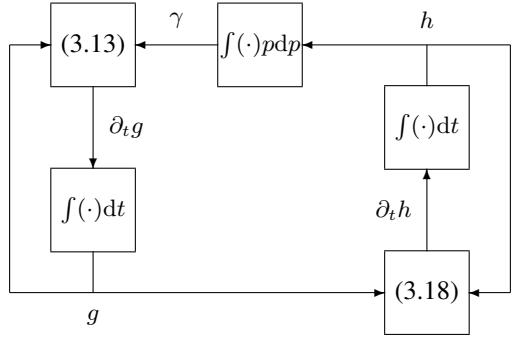

In view of the role which the  function $\gamma$  plays in the coupling of the PDEs (\ref{gdot}), (\ref{hdot}), we will also note the effect of symmetries in the conditional momentum PDF $h$  on $\gamma$. To this end, $h$ is decomposed as
\begin{equation}
\label{hhh}
    h = h_+ + h_-
\end{equation}
into functions $h_+$, $h_-$ which are, respectively,  symmetric and antisymmetric over the momentum variables:
\begin{equation*}
\label{h+-}
    h_\pm (p\mid q, t):= \frac{1}{2}(h (p\mid q, t)\pm h (-p\mid q, t)).
\end{equation*}
Since $\int_{\mR^n} h_-\rd p = 0$,  the symmetric part $h_+$ of (\ref{hhh}) inherits from $h$ the normalization property (\ref{hnorm}):
\begin{equation*}
\label{hnorm+}
    \int_{\mR^n} h_+(p\mid q, t)\rd p = 1,
    \qquad
    t\> 0,\ q\in S.
\end{equation*}
Similarly, $\int_{\mR^n} h_+ p\rd p = 0$, so that only the antisymmetric  part $h_-$ contributes to the conditional momentum mean (\ref{EPQ}):
\begin{equation}
\label{EPQ-}
  \gamma(t,q)
  =
  \int_{\mR^n}
   h_-(p\mid q,t) p
  \rd p.
\end{equation}
In view of the symmetry of the Hamiltonian $H$ in (\ref{HTV}) over the momentum variables, $\{H, h_+\}$ is antisymmetric, while $\{H, h_-\}$ is symmetric. Hence, (\ref{hdot}) can be represented in vector-matrix form as
\begin{equation}
\label{hdot+-}
    \d_t
    \begin{bmatrix}
      h_+\\
      h_-
    \end{bmatrix}
    =
        \begin{bmatrix}
          \cA & \cB\\
          \cB & \cA
        \end{bmatrix}
    \begin{bmatrix}
      h_+\\
      h_-
    \end{bmatrix},
\end{equation}
where $\cA$, $\cB$ are $(g,h)$-dependent integro-differential operators, acting on a function $\varphi\in C^2(S\x \mR^n, \mR)$ as
\begin{align}
\nonumber
    \cA(\varphi)
     :=&
    (
    \bra
        F, M^{-1}
    \ket
    +
    \gamma^\rT M^{-1}\d_q \ln g
    + \div_q (M^{-1}\gamma)
    )\varphi\\
\label{cA}
    & +
    p^\rT M^{-1} F \d_p\varphi
    +
    \frac{1}{2}
    \bra
        D, \d_p^2\varphi
    \ket, \\
\label{cB}
    \cB(\varphi)
    := &
    \{H,\varphi\}- p^\rT M^{-1}(\d_q \ln g)\varphi,
\end{align}
with $\gamma$ in (\ref{cA}) being linearly related to $h_-$ through (\ref{EPQ-}). The latter makes the right-hand side of (\ref{hdot+-}) depend bilinearly on $h_\pm$ in accordance  with the quadratic nonlinearity of the PDE (\ref{hdot}) whose right-hand side is representable as $\cA(h)+\cB(h)$. The structure of the $(2\x 2)$-matrix in (\ref{hdot+-}) reflects the fact that the operator
$\cA$ in (\ref{cA}) preserves the symmetry (or antisymmetry) of a function with respect to $p$, whereas $\cB$ in (\ref{cB})  changes it to the opposite.

\section{Statistical Mechanical Equilibrium and Invariant Measure}
\label{sec:inv}

According to the postulate of statistical mechanics \cite{R_1978}, the probability  distribution of the system state $X$ from (\ref{XQP})  in equilibrium with its environment, which is modelled as a heat bath at absolute temperature $\cT>0$,    minimises the Helmholtz free energy
\begin{align}
\nonumber
    \bF(f)
    & :=
    \bE H(X) - \cT  \bh(X)=
    \bE (H(X) + \cT  \ln f(X))\\
\label{bF}
    & =
    \int_{S\x \mR^n}
    f(x)
    (H(x) + \cT  \ln f(x))\rd x,
\end{align}
where the dependence of the random vector $X$ and its PDF $f$ on time is omitted for brevity.  The mean value $\bE H$ of the Hamiltonian (\ref{HTV}) pertains to the internal energy of the system, and
\begin{equation*}
\label{bh}
    \bh(X)
    :=
    -
    \bE \ln f(X)
    =
    -
    \int_{S\x \mR^n} f(x) \ln f(x)\rd x
\end{equation*}
is the differential entropy \cite{CT_2006} of $X$ (with respect to the Lebesgue measure $\lambda_{2n}$ on the phase space), where the standard continuity convention $0\ln 0 = 0$ is used.    The minimum value of (\ref{bF}) over the position-momentum PDF $f$ is achieved at  the unique Maxwell-Boltzmann PDF $f_*$, which is expressed in terms of the Hamiltonian $H$ as
\begin{equation}
\label{f*}
    f_*(x)
    =
    \frac{\re^{-\beta H(x)}}{Z(\beta)},
    \qquad
    x \in S\x \mR^n,
\end{equation}
where
\begin{equation}
\label{Sbeta}
    \beta
    :=
    \frac{1}{\cT }
\end{equation}
is the inverse temperature. Here, the units are chosen so that the Boltzmann constant (relating  the thermodynamic entropy to its information theoretic  counterpart \cite{CT_2006,ME_1981}) is $k_\rB = 1$,  and the temperature is measured in the units of energy. 
The normalization constant
\begin{equation}
\label{Z}
    Z(\beta)
    :=
    \int_{S \x \mR^n}
    \re^{-\beta H(x)}
    \rd x
\end{equation}
in (\ref{f*}) is the statistical mechanical partition function \cite{ME_1981}, which specifies the minimum free energy in (\ref{bF}) as
\begin{align}
\nonumber
    \bF(f_*)
    & =
    \min_{{\rm over\ PDFs}\ f\ {\rm on}\ S\x \mR^n} \bF(f) \\
\label{bF*}
    & =
    \bE_* (H + \cT  \ln f_*(X))=
    -\cT \ln Z(\beta),
\end{align}
where $\bE_*(\cdot)$ is the expectation over the equilibrium PDF  $f_*$.  For any PDF $f$,
the free energy (\ref{bF}) is linked to the relative entropy (or Kullback-Leibler informational divergence) \cite{CT_2006} of $f$ with respect to $f_*$:
\begin{align}
\nonumber
    \bD(f\| f_*)
    & :=
    \bE \ln \frac{f(X)}{f_*(X)}=
    \bE ( \beta H+\ln f(X))
    +
    \ln Z(\beta) \\
\label{bD}
    & =
    \beta
    (\bF(f)-\bF(f_*)),
\end{align}
as follows from   (\ref{f*}),  (\ref{bF*}). This also relates the variational property of $f_*$ as the minimiser for $\bF$ to the fact that the relative entropy $\bD(f\| f_*)$ is always nonnegative and vanishes only when $f=f_*$ almost everywhere \cite{CT_2006} in the sense that the set $\fX:=\{x\in S\x \mR^n: f(x)\ne f_*(x)\}$  satisfies $\int_\fX f(x)\rd x = 0$ (or equivalently, $\int_\fX f_*(x)\rd x = 0$).
Due to the structure (\ref{T})--(\ref{HTV})  of the Hamiltonian, the  equilibrium PDF (\ref{f*}) is factorised as
\begin{equation}
\label{fgh*}
    f_*(x)
    =
    g_*(q) h_*(p\mid q),
    \qquad
    x \in S\x \mR^n,
\end{equation}
where
\begin{align}
\nonumber
  g_*(q)
  & =
  \int_{\mR^n}
  f_*(q, p)
  \rd p
  =
  \frac{1}{Z(\beta)}
  \re^{-\beta V(q)}
  \int_{\mR^n}
  \re^{-\frac{1}{2}\beta \|p\|_{M(q)^{-1}}^2}
  \rd p\\
\label{g*}
  & =
  \frac{1}{Z(\beta)}
  (2\pi \cT )^{n/2}
  \re^{-\beta V(q)}
  \sqrt{\det M(q)},
  \qquad
  q \in S,
\end{align}
is the equilibrium position PDF, and
\begin{equation}
\label{h*}
  h_*(p\mid q)
  :=
  \frac{  (2\pi \cT )^{-n/2}}{\sqrt{\det M(q)}}
  \re^{-\frac{1}{2}\beta\|p\|_{M(q)^{-1}}^2},
  \qquad
  p \in \mR^n,
\end{equation}
is the equilibrium conditional momentum PDF. 
For any given position $q\in S$, the PDF $h_*(\cdot \mid q)$
specifies a Gaussian distribution in $\mR^n$ with the mean vector
\begin{equation}
\label{E*PQ}
    \bE_*
    (P\mid Q = q)
    :=
    \int_{\mR^n}
    p
    h_*(p\mid q)
    \rd p
    =0
\end{equation}
and the covariance matrix
\begin{equation}
\label{E*PPQ}
    \bE_*(PP^\rT \mid Q=q)
    :=
    \int_{\mR^n}
    pp^\rT
    h_*(p\mid q)
    \rd p
    =
    \cT  M(q).
\end{equation}
 Since the PDF (\ref{g*}) satisfies the normalization $\int_S g_*(q)\rd q = 1$, the partition function (\ref{Z}) takes the form
\begin{equation}
\label{Z1}
    Z(\beta)
    =
    (2\pi \cT )^{n/2}
    \int_S
    \re^{-\beta V(q)}
    \sqrt{\det M(q)}
    \rd q.
\end{equation}
If the potential energy $V$ achieves its global minimum value $\inf_{q\in S }V(q)$ at a unique  position  $q_* \in S$,
\begin{equation}
\label{q*}
    \Argmin_{q\in S}
    V(q)
     = \{q_*\},
\end{equation}
  with a positive definite Hessian matrix
\begin{equation*}
\label{V''}
    K
    :=
    V''(q_*)\succ 0
\end{equation*}
(the stiffness matrix at the equilibrium position),
then application of the Laplace method \cite[Theorem 4.1 on pp. 74, 75]{F_1977} to the integral in (\ref{Z1}) yields the following low-temperature asymptotics:
\begin{align}
\nonumber
    Z(\beta)
    & \sim
    (2\pi \cT )^{n/2}
    \re^{-\beta V(q_*)}
    \sqrt{\det M(q_*)}
    \int_{\mR^n}
    \re^{-\frac{1}{2}\beta \|q-q_*\|_K^2}
    \rd q\\
\label{Zasy}
    & =
    (2\pi \cT )^n
    \re^{-\beta V(q_*)}
    \sqrt{\frac{\det M(q_*)}{\det K}},
    \qquad
    {\rm as}\ \cT  \to 0+,
\end{align}
where use is made of an auxiliary Gaussian PDF
\begin{equation}
\label{ghat}
    \wh{g}(q)
    :=
    (2\pi \cT )^{-n/2} \sqrt{\det K}\, \re^{-\frac{1}{2}\beta \|q-q_*\|_K^2},
    \qquad
    q \in \mR^n,
\end{equation}
with the mean vector $q_*$ and the covariance matrix $\cT  K^{-1}$, which arises from the Taylor series approximation
\begin{equation*}
\label{Vquad}
    V(q) = V(q_*) + \frac{1}{2} \|q-q_*\|_K^2 + o(|q-q_*|^2),
      \qquad
      {\rm as}\  q \to q_*,
\end{equation*}
in view of the vanishing gradient $V'(q_*) = 0$. A combination of (\ref{Zasy}) with (\ref{g*}) clarifies the role of $\wh{g}$ in (\ref{ghat}) as a low-temperature  Gaussian approximation for the equilibrium  position PDF $g_*$ in a small neighbourhood of $q_*$ (as $\cT \to 0+$ and $q\to q_*$ in such a way that $|q-q_*| = O(\sqrt{\cT })$). The maxima of the equilibrium position PDF $g_*$ in (\ref{g*}) are the minima of   $\beta V -\frac{1}{2}\ln\det M$ in $S$,
\begin{equation}
\label{maxmin}
    \Argmax_{q\in S}
    g_*(q) =
    \Argmin_{q\in S}
    \Big(
        V(q) -\frac{\cT}{2}\ln\det M(q)
    \Big),
\end{equation}
and can differ from (\ref{q*}) due to the dependence  of the mass matrix $M$ on $q$.  However, as $\cT \to 0+$ (or, equivalently, $\beta \to +\infty$), the potential energy term $V$ in (\ref{maxmin}) becomes dominant, the $q$-dependence of $M$ loses its effect, and  the set of maxima of $g_*$  converges to $q_*$. In a particular case when the mass matrix $M$ is constant, the term $-\frac{\cT}{2}\ln\det M$ in (\ref{maxmin}) becomes irrelevant and the maxima of $g_*$ coincide with the minima of $V$ at any temperature $\cT >0$. The peaks of $g_*$ are increasingly pronounced for low values of $\cT $, which underlies  the use of Hamiltonian dynamics in the heavy-ball stochastic optimization algorithms for minimising the potential $V$ over $S$ in this case (see, for example, \cite{GP_2014} and references therein).

The PDF $f_*$  in (\ref{f*}), suggested by the postulate of equilibrium statistical mechanics,  corresponds to an invariant measure for the stochastic Hamiltonian system (\ref{dX}) if $f_*$ is a steady-state solution of the FPKE (\ref{FPKE}). A sufficient and necessary  condition for this property to hold, provided by the following lemma for completeness, is a multivariate version of the Einstein relation \cite[Eq. (3.14) on p. 260]{K_1966} between the damping and diffusion coefficients.

\begin{lem}
\label{lem:FD}
The PDF (\ref{f*}) is invariant for the stochastic Hamiltonian system (\ref{dX}) if and only if 
the damping and diffusion matrices are related as
\begin{equation}
\label{FD}
    F(q) = \frac{1}{2}\beta D(q),
    \qquad
    q \in S.
\end{equation}
\end{lem}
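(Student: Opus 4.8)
The plan is to substitute the candidate PDF $f_*$ from (\ref{f*}) directly into the FPKE (\ref{FPKE}): since $f_*$ does not depend on time, it is a steady-state solution of (\ref{FPKE}) --- equivalently, the invariant PDF for (\ref{dX}) --- if and only if the right-hand side of (\ref{FPKE}) vanishes identically on $S\x \mR^n$ when $f=f_*$. So the lemma reduces to computing that right-hand side and reading off when it is zero.

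First I would record the momentum derivatives of $f_*$. From (\ref{f*}) and (\ref{HTV}) one has $f_*' = \d_x f_* = -\beta f_* H'$, and since $\d_p H = M^{-1}p$ by (\ref{T}), this gives $\d_p f_* = -\beta f_* M^{-1}p$ and, differentiating once more (with $M$ independent of $p$), $\d_p^2 f_* = \beta f_*(\beta M^{-1}pp^\rT M^{-1} - M^{-1})$. The conservative term in (\ref{FPKE}) then drops out at once: $\{H,f_*\} = H'^\rT J f_*' = -\beta f_* H'^\rT J H' = 0$ by the identity (\ref{HH0}), irrespective of $F$ and $D$ --- this is precisely the symplectic cancellation that makes the non-equilibrium part of the generator carry all the relevant information.

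It remains to collect the three damping/diffusion terms. Substituting the above into (\ref{FPKE}) and using the symmetry of $F$, $D$, $M$ together with $\bra D, M^{-1}pp^\rT M^{-1}\ket = p^\rT M^{-1}DM^{-1}p$, the right-hand side of (\ref{FPKE}) evaluated at $f_*$ becomes
\begin{equation*}
    f_*\Big(
    -\beta\, p^\rT M^{-1}\big(F - \tfrac{1}{2}\beta D\big)M^{-1}p
    +
    \Bra F - \tfrac{1}{2}\beta D,\ M^{-1}\Ket
    \Big).
\end{equation*}
If (\ref{FD}) holds, both terms vanish, so $f_*$ is a steady-state solution of (\ref{FPKE}); this gives the ``if'' direction. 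For the converse, suppose $f_*$ is invariant, so the displayed expression is zero for all $q\in S$, $p\in \mR^n$. Since $f_*>0$, the bracketed quantity must vanish; being the sum of a homogeneous degree-two polynomial in $p$ and a $p$-independent term, each part must vanish separately. The quadratic part vanishing for all $p$, combined with the symmetry of $F(q)-\tfrac{1}{2}\beta D(q)$ and the invertibility of $M(q)$ (so that $M(q)^{-1}p$ ranges over all of $\mR^n$), forces $F(q) = \tfrac{1}{2}\beta D(q)$; the constant part then vanishes automatically. As $q\in S$ is arbitrary, (\ref{FD}) follows.

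The computation is routine, and I do not expect a genuine obstacle. The only step needing a little care is the last one: arguing that a quadratic form in $p$ which is identically zero forces its symmetric coefficient matrix to be zero, and that the additive $p$-independent constant does not interfere because it sits in a different homogeneity degree in $p$. One should also note explicitly at the outset that ``$f_*$ invariant for (\ref{dX})'' is, in the present setting, synonymous with ``$f_*$ is a steady-state solution of (\ref{FPKE})'', so that the entire argument is the verification of the PDE.
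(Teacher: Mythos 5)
Your proposal is correct and follows essentially the same route as the paper: verify $\cL^\dagger(f_*)=0$, note that the symplectic term vanishes via (\ref{HH0}), and reduce the remainder to the identity $\bra F-\tfrac{1}{2}\beta D, M^{-1}\ket-\beta\, p^\rT M^{-1}(F-\tfrac{1}{2}\beta D)M^{-1}p=0$ for all $q,p$, which holds iff (\ref{FD}) does. The paper merely packages the same computation through the factorisation $f_*=g_*h_*$ and the logarithmic derivatives of $h_*$ in (\ref{hgradHess*}), whereas you differentiate $f_*$ directly; the resulting condition and the concluding quadratic-form argument are identical.
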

\begin{proof}
Since $\{H, \varphi(H)\} = \varphi'(H)\{H,H\} = 0$ for any $\varphi \in C^1(\mR, \mR)$ in view of (\ref{Poiss}), (\ref{HH0}), so that the PDF (\ref{f*}) satisfies $\{H,f_*\} = 0$, the factorisation (\ref{fgh*}) reduces the stationarity condition
\begin{align}
\nonumber
    0  & =
    \cL^\dagger(f_*)
    =
    \{H,f_*\}
    +
    p^\rT M^{-1} F \d_pf_*
    +
    \bra
        F, M^{-1}
    \ket
    f_*
    +
    \frac{1}{2}
    \bra
        D, \d_p^2 f_*
    \ket\\
\label{cLf*}
     & =
    \Big(
    p^\rT M^{-1} F \d_p\ln f_*
    +
    \bra
        F, M^{-1}
    \ket
    +
    \frac{1}{2}
    (
    \bra
        D, \d_p^2 \ln f_*
    \ket
    +
    \|\d_p \ln f_*\|_D^2
    )
    \Big)
    f_*
\end{align}
for the FPKE (\ref{FPKE}) to
\begin{equation}
\label{FPKEh*}
    p^\rT M^{-1} F \d_p\ln h_*
    +
    \bra
        F, M^{-1}
    \ket
    +
    \frac{1}{2}
    (
    \bra
        D, \d_p^2 \ln h_*
    \ket
    +
    \|\d_p \ln h_*\|_D^2
    )
    =0.
\end{equation}
Here, use is made of the identities
\begin{equation}
\label{lnphi'}
    (\ln\phi)'
    =
    \phi'/\phi,
    \qquad
    (\ln\phi)''
    =
    \phi''/\phi
    -
    \phi'\phi'^\rT/\phi^2
\end{equation}
for positive functions $\phi \in C^2(\mR^r, \mR)$,  along with the relation $\ln f_* = \ln g_* + \ln h_*$ and the fact that the equilibrium position PDF $g_*$ does not depend on the momentum variables,  whereby $\d_p \ln f_* = \d_p \ln h_*$ and $\d_p^2 \ln f_* = \d_p^2 \ln h_*$.
It follows from (\ref{h*}) that the logarithmic gradient vector and the Hessian matrix of $h_*$ with respect to the momentum variables take the form
\begin{equation}
\label{hgradHess*}
    \d_p \ln h_*
    =
    -\beta M^{-1}p,
    \qquad
    \d_p^2 \ln h_*
    =
    -
    \beta M^{-1},
\end{equation}
whereby
(\ref{FPKEh*}) is equivalent to
\begin{align}
\nonumber
    0
     & =
    -\beta \|M^{-1}p\|_F^2
    +
    \bra F, M^{-1}\ket
    -
    \frac{1}{2}
    \beta
    \bra D, M^{-1}\ket
    +
    \frac{1}{2}
    \beta^2
    \|M^{-1}p\|_D^2\\
\label{FPKEh*1}
    & =
    \underbrace{
    \Bra
        F-\frac{1}{2} \beta D,
        M^{-1}
    \Ket
    }_{p{\rm-independent}}
    -
    \beta
    p^\rT
    \underbrace{
    M^{-1} \Big(F-\frac{1}{2} \beta D\Big) M^{-1}
}_{p{\rm-independent}}p.
\end{align}
The right-hand side of (\ref{FPKEh*1}) is a quadratic function of $p$ whose coefficients (as indicated)  depend only on $q$. Therefore,  the fulfillment of
(\ref{FPKEh*1})
for all $q\in S$ and $p \in \mR^n$
is equivalent to (\ref{FD}). Alternatively, this can also be obtained by considering $h_*$ as a steady-state solution of the PDE (\ref{hdot}).
\end{proof}

Since the relation (\ref{FD}) secures correspondence between the stochastic dynamics (\ref{dX}) and the  statistical mechanical equilibrium   postulate, this condition is assumed to be satisfied in what follows. In particular, due to (\ref{FD}), the energy balance relation (\ref{EHdot}) takes the form
\begin{equation}
\label{EHdotD}
    (\bE H)^{^\centerdot}
    =
    \frac{1}{2}
    \bE
    \bra
        M^{-1} D,
        I_n-\beta\bE(PP^{\rT}\mid Q)M^{-1}
    \ket.
\end{equation}
Although the diffusion matrix $D$, which  parameterises the damping matrix $F$ through  (\ref{FD}), does not enter the Hamiltonian $H$ or the equilibrium PDF $f_*$ in (\ref{f*}), it affects the dynamics of the system and its  convergence to equilibrium. In addition to (\ref{EHdotD}),   this influence manifests itself in terms of other quantities, including  the free energy (or relative entropy) functionals.

\section{Position-Momentum Relative Entropy Decomposition}
\label{sec:posmoment}

For the sake of brevity, we denote the relative entropy (\ref{bD}),   as a function of time,  by
\begin{equation}
\label{cF}
  \cF(t)
   :=
  \bD(f(t,\cdot)\| f_*)
  =
  \bE\theta(t,X(t))
  =
  \int_{S\x \mR^n}
  f(t,x)
  \theta(t,x)
  \rd x,
  \quad
  t \> 0.
\end{equation}
Here, use is made of an auxiliary  function $\theta: \mR_+ \x S \x \mR^n\to \mR$ along with related functions $\xi: \mR_+ \x S\to \mR$ and $\eta: \mR_+ \x S\x\mR^n \to \mR$, which are defined as logarithmic PDF ratios
\begin{align}
\label{theta}
  \theta(t,q,p)
  & :=
  \ln \frac{f(t,q,p)}{f_*(q,p)} = \xi(t,q) + \eta(t,q,p),\\
\label{xi}
  \xi(t,q)
  & :=
  \ln \frac{g(t,q)}{g_*(q)},\\
\label{eta}
  \eta(t,q,p)
  & :=
  \ln \frac{h(p\mid q, t)}{h_*(p\mid q)},
\end{align}
and, in addition to (\ref{gpos}),  the conditional momentum PDF $h$ is also assumed to be positive everywhere:
\begin{equation}
\label{hpos}
  h(p\mid q, t)>0,
  \qquad
  t \> 0,\
  q \in S,\
  p \in \mR^n.
\end{equation}
The second equality in (\ref{theta}) follows from the factorizations (\ref{fgh}), (\ref{fgh*}) of the PDF $f$ and its invariant counterpart $f_*$. For any time $t\> 0$,  the decomposition in (\ref{theta}) in terms of (\ref{xi}), (\ref{eta}) allows (\ref{cF}) to be split as
\begin{equation}
\label{bDfgh}
    \cF(t)
     =
    \bE
    (
        \xi(t,Q(t))
        +
        \eta(t,Q(t),P(t))
    )
    =
    \cG(t) + \cH(t)
\end{equation}
into the corresponding relative entropies for the position PDF $g$  and the conditional momentum PDF $h$:
\begin{align}
\label{cG}
    \cG(t)
    & := \bD(g(t,\cdot)\| g_*)
     =
     \bE \xi(t,Q(t))
     =
     \int_S
     g(t,q)
     \xi(t,q)
     \rd q,\\
\nonumber
    \cH(t) & := \bD(h(\cdot\mid \cdot,t )\| h_*)
    =
    \bE \eta(t,Q(t),P(t))
    =
     \int_{S\x \mR^n}
     f(t,q,p)
     \eta(t,q,p)
     \rd q
     \rd p\\
\label{cH}
     & =
     \int_{S}
     g(t,q)
     \Big(
     \underbrace{
     \int_{\mR^n}
     h(p\mid q, t)
     \eta(t,q,p)
     \rd p}_{\bD(h(\cdot  \mid q, t) \| h_*(\cdot \mid q))}
     \Big)
     \rd q.
\end{align}
The innermost integral on the right-hand side of (\ref{cH}) is the relative entropy of the conditional momentum PDF $h(\cdot \mid q, t)$ with respect to $h_*(\cdot \mid q)$ for a given $q \in S$. For ease of reference, Table~\ref{tab:ass} summarises  the above quantities along with  their associations.
\begin{table}[h!]
\centering
\caption{An informal association for the system variables, energy functions, PDFs, logarithmic PDF ratios and entropies.\label{tab:ass}}
\begin{tabular}{|l|l|c|c|c|}
 \hline
variable	& energy function	& PDF & logarithmic PDF ratio & entropy\\
 \hline
state $X$		& Hamiltonian $H$			& $f$ & $\theta$ & $\cF$    \\
position $Q$		& potential $V$			& $g$ & $\xi$    & $\cG$\\
momentum $P$		& kinetic $T$			& $h$ & $\eta$   & $\cH$\\
\hline
\end{tabular}
\end{table}

The relative entropies provide upper bounds for the $L^1$-deviations  of the corresponding PDFs from their invariant counterparts. More precisely,
application of \index{Pinsker's inequality} Pinsker's inequality (see, for example, \cite[Lemma 2.5 on p. 88]{T_2009}) to (\ref{cF}) leads to an upper bound for the total variation distance \cite{Z_1983} between the probability distributions with the PDFs $f$ and $f_*$:
\begin{align}
\nonumber
    d(f(t,\cdot),f_*)
    & :=
    \sup_{B \in \fB(S\x \mR^n)}
    \Big|
    \int_B
    (f(t,x)-f_*(x))
    \rd x
    \Big|\\
\label{Pinsker}
    & =
    \frac{1}{2}
    \|f(t,\cdot)-f_*\|_1
    \<
    \sqrt{\frac{1}{2} \cF(t)},
\end{align}
where the supremum is over the $\sigma$-algebra of Borel subsets of the phase space $S\x \mR^n$, and
$
    \|\varphi\|_1
    :=
    \int_{S\x \mR^n}
    |\varphi(x)|
    \rd x
$
is the $L^1$-norm of an absolutely integrable function $\varphi: S\x\mR^n\to \mR$. A similar inequality holds for (\ref{cG}):
\begin{equation}
\label{ggL1}
    \|g(t,\cdot)-g_*\|_1
    :=
    \int_{S}
    |g(t,q)-g_*(q)|
    \rd q
    \<
    \sqrt{2 \cG(t)}.
\end{equation}
In application to (\ref{cH}), a combination of Pinsker's and Jensen's inequalities  leads to a weighted $L^1$-bound:
\begin{align}
\nonumber
    \sqrt{2\cH(t)}
     & =
     \sqrt{
     2\int_{S}
     g(t,q)
     \bD(h(\cdot  \mid q, t) \| h_*(\cdot \mid q))
     \rd q}\\
\nonumber
    & \>
     \sqrt{\int_{S}
     g(t,q)
     \|h(\cdot  \mid q, t)-h_*(\cdot \mid q)\|_1^2
     \rd q}\\
     \label{rhoL1}
    & \>
     \int_{S}
     g(t,q)
     \|h(\cdot  \mid q, t)-h_*(\cdot \mid q)\|_1
     \rd q
     \>
          \|
     \varrho(t,\cdot) -\wh{\varrho}(t,\cdot)
     \|_1
\end{align}
in view of the convexity of the function $\mR \ni u\mapsto u^2$ and the $L^1(\mR^n, \mR)$-norm $\|\cdot\|_1$. Here,
\begin{equation}
\label{momPDF}
    \varrho(t,p)
    :=
    \int_S
    f(t,q,p)\rd q
    =
    \int_S
    g(t,q)
    h(p\mid q, t)
    \rd q
\end{equation}
is the actual momentum PDF, and
\begin{equation}
\label{momPDFhat}
    \wh{\varrho}(t,p)
    =
    \int_S
    g(t,q)
    h_*(p\mid q)
    \rd q,
    \qquad
    t \> 0,\
    p \in \mR^n
\end{equation}
is the PDF which the momentum $P(t)$ would have if its conditional PDF $h$ coincided with $h_*$. Since the total variation distance $d$ between probability measures does not exceed $1$,  the inequalities (\ref{Pinsker})--(\ref{rhoL1}) are useful only for small values   of the relative entropy ($< 2$).

\section{Relative Entropy Dissipation}
\label{sec:relent}

Regardless of the particular structure of the Markovian dynamics (\ref{SH1}), (\ref{SH2}),  the relative entropy (\ref{cF}) with respect to the invariant measure is nonincreasing  in time (which is interpreted as the second law of thermodynamics \cite{CT_2006} in application to stochastic systems). However, the specific form of the generator $\cL$ in  (\ref{cL}) (the sparsity of the diffusion matrix $        \small\begin{bmatrix}
          0 & 0 \\
          0 & D
        \end{bmatrix}$ of the state process $X$) influences the relative entropy dissipation as discussed below.

\begin{lem}
\label{lem:bDdot}
For the stochastic Hamiltonian system (\ref{SH1}), (\ref{SH2})  satisfying the dam\-ping-diffusion condition  (\ref{FD}) along with (\ref{gpos}), (\ref{hpos}), the relative entropy (\ref{cF}) of the position-momentum  PDF $f$ with respect to the invariant PDF $f_*$ in (\ref{f*}) evolves as
\begin{equation}
\label{bDdot}
    \dot{\cF}
     =
    -\frac{1}{2}
    \bE
    (
    \|\d_p \eta\|_D^2
    )
    \< 0,
\end{equation}
where $\eta$ is the logarithmic PDF ratio (\ref{eta}) associated with the conditional momentum PDF $h$ and its invariant counterpart $h_*$ in (\ref{h*}). 
\end{lem}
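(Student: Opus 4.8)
The plan is to differentiate $\cF(t)=\int_{S\x\mR^n} f\,\theta\,\rd x$ in (\ref{cF}) under the integral sign. Since $f_*$ is time-independent, $\d_t\theta=\d_t f/f$, so $\int f\,\d_t\theta\,\rd x=\int\d_t f\,\rd x=\d_t\!\int f\,\rd x=0$ by conservation of total probability, and therefore $\dot\cF=\int(\d_t f)\,\theta\,\rd x$. Substituting the FPKE (\ref{FPKE}), i.e. $\d_t f=\cL^\dagger(f)$, and transferring the formal adjoint back onto $\theta$ (boundary terms assumed to vanish — see below), this becomes $\dot\cF=\int f\,\cL(\theta)\,\rd x=\bE\,\cL(\theta)(X)$, with $\cL$ as in (\ref{cL}).

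Next I would split $\cL(\theta)=\{\theta,H\}+\bigl(-p^\rT M^{-1}F\,\d_p\theta+\frac12\bra D,\d_p^2\theta\ket\bigr)$ and handle the two parts separately. For the symplectic part, $f\,\theta'^\rT=f'-f(\ln f_*)'$ and $(\ln f_*)'=-\beta H'$ by (\ref{f*}), so $f\{\theta,H\}=\{f,H\}+\beta f\{H,H\}$; integrating, $\int\{f,H\}\,\rd x=0$ because $\{f,H\}=\div(fJH')$ is a pure divergence (cf. (\ref{divfJH})), while $\int f\{H,H\}\,\rd x=0$ identically by (\ref{HH0}). Thus the Hamiltonian part of the generator contributes nothing. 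A small but crucial observation entering the rest of the computation is that, by the factorisations (\ref{fgh}), (\ref{fgh*}), $\theta=\xi+\eta$ with $\xi$ independent of $p$, so that $\d_p\theta=\d_p\eta$ and $\d_p^2\theta=\d_p^2\eta$, and likewise $\d_p\ln f=\d_p\ln h$.

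For the dissipative part I would invoke the damping--diffusion relation (\ref{FD}) together with $\d_p\ln h_*=-\beta M^{-1}p$ from (\ref{hgradHess*}) to rewrite $-p^\rT M^{-1}F\,\d_p\theta=\frac12(\d_p\ln h_*)^\rT D\,\d_p\eta$. Then I would integrate the second-order term by parts in $p$: since $\div_p(fD\,\d_p\eta)=f\bigl((\d_p\ln h)^\rT D\,\d_p\eta+\bra D,\d_p^2\eta\ket\bigr)$, its integral over $\mR^n$ vanishes, giving $\int f\bra D,\d_p^2\eta\ket\,\rd x=-\int f(\d_p\ln h)^\rT D\,\d_p\eta\,\rd x$. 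Combining the two contributions yields $\frac12\int f(\d_p\ln h_*-\d_p\ln h)^\rT D\,\d_p\eta\,\rd x=-\frac12\int f\,\|\d_p\eta\|_D^2\,\rd x$, using $\d_p\ln h-\d_p\ln h_*=\d_p\eta$. Hence $\dot\cF=-\frac12\bE(\|\d_p\eta\|_D^2)\<0$, the sign following from $D\in\mP_n$.

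The hard part is not the algebra but the analytic justification of the two integrations by parts — discarding the flux of $fJH'$ and of $fD\,\d_p\eta$ at infinity — which needs adequate decay of $f$ and its first derivatives (the same type of sufficient condition, $o(|p|^{1-n})$ flux decay, already used in the proof of Lemma~\ref{lem:gdot}); I would state this as a standing regularity hypothesis rather than establish it. As a consistency check, the identity can be rederived by differentiating the decomposition $\cF=\cG+\cH$ in (\ref{bDfgh}) and applying the transport equations (\ref{gdot}), (\ref{hdot}) for $g$ and $h$ directly; this route additionally displays how the single dissipation term $-\frac12\bE\|\d_p\eta\|_D^2$ is split between the position and conditional-momentum entropies, which is exactly the information the subsequent analysis of entropy breaks will require.
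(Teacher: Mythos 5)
Your proposal is correct, and it shares the paper's skeleton (differentiate $\cF$, use the normalisation of $f$ to kill the $\d_t\theta$ contribution, reduce everything to $\int f\,\cL(\theta)\,\rd x$, note $\d_p\theta=\d_p\eta$), but the core device differs. The paper disposes of $\int f\,\cL(\theta)\,\rd x$ in one stroke via the Fleming logarithmic transformation (\ref{logtrans}), which gives $f\,\cL(\theta)=f_*\,\cL(f/f_*)-\tfrac12 f\|\d_p\theta\|_D^2$, and then kills the first term by moving the adjoint onto $f_*$ and invoking the stationarity $\cL^\dagger(f_*)=0$ already established in Lemma~\ref{lem:FD} under (\ref{FD}). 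You instead split the generator into its symplectic and dissipative parts and compute each directly: the Poisson-bracket part reduces to $\{f,H\}+\beta f\{H,H\}$, a pure divergence plus zero, while the dissipative part is handled by (\ref{FD}), the explicit Gaussian gradient $\d_p\ln h_*=-\beta M^{-1}p$ from (\ref{hgradHess*}), and an integration by parts in $p$; this effectively re-derives inline the piece of the stationarity verification (\ref{cLf*})--(\ref{FPKEh*1}) that the paper quotes. The trade-off: your route makes completely explicit where the damping--diffusion relation enters and why the dissipation involves only $\d_p\eta$, whereas the paper's route is more structural — it yields $\dot{\bD}(f\|f_*)=-\tfrac12\bE\|\d_p\theta\|_D^2$ for any invariant density of a generator with this degenerate diffusion structure, with the specific Maxwell--Boltzmann form entering only through Lemma~\ref{lem:FD}. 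Both arguments rest on the same unproved flux-decay assumptions when discarding boundary terms, which you flag appropriately, so the level of rigour matches the paper's.
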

\begin{proof}
The time differentiation of (\ref{cF}) can be carried out by applying the generator $\cL$ from (\ref{cL}) over the position and momentum variables of the logarithmic PDF ratio $\theta$ in (\ref{theta}) as
\begin{align}
\nonumber
    \dot{\cF}
    & =
    (
    \bE
    \theta(t,X(t))
    )^{^\centerdot}
    =
    \bE
    (
    \d_t \theta
    +
    \cL(\theta)
    )\\
\nonumber
    & =
    \int_{S\x \mR^n}
    (
    \d_t f
    +
    f
    \cL(\theta)
    )
    \rd x
    =
    \int_{S\x \mR^n}
    \Big(
    f_*
    \cL\Big(\frac{f}{f_*}\Big)
    -\frac{1}{2}
    f
    \|\d_p \theta\|_D^2
    \Big)
    \rd x\\
\label{bDffdot}
    & =
    \int_{S\x \mR^n}
    \Big(
    \underbrace{\cL^\dagger(f_*)}_{0}
    \frac{f}{f_*}
    -\frac{1}{2}
    f
    \|\d_p \theta\|_D^2
    \Big)
    \rd x
    =
    -\frac{1}{2}
    \bE
    (
    \|\d_p \theta\|_D^2
    ),
\end{align}
where the right-hand side has the structure of Dirichlet forms \cite{FOT_2011} for Markov processes.  
Here, the relation $f\d_t \theta = f\d_t \ln f = \d_t f$ is obtained from (\ref{theta})  and the time independence of the invariant PDF $f_*$. Also,
the identity $\int_{S \x \mR^n} \d_t f\rd x = \d_t \int_{S \x \mR^n} f\rd x = 0$ follows from the normalization property $\int_{S \x \mR^n} f\rd x = 1$ of the PDF $f$, and (\ref{cLf*}) is used for the equilibrium PDF (\ref{f*}) under the condition (\ref{FD}).  Furthermore, use is made of the Fleming logarithmic transformation \cite{F_1982}
(see also \cite[Eq.~(81) on p.~201]{BFP_2002}), which, in application to the generator
$\cL$ in (\ref{cL}) and a twice continuously differentiable function $S\x \mR^n \ni (q,p) \mapsto \varphi(q,p) > 0$,  leads to
\begin{equation}
\label{logtrans}
    \cL(\ln \varphi)
     =
     \frac{1}{\varphi}
    \cL(\varphi)
    -
    \frac{1}{2}
    \|\d_p\ln \varphi\|_D^2,
\end{equation}
where the quadratic form is influenced by the position-momentum structure (\ref{xqp}) of the  system variables and the absence of diffusion in the position ODE (\ref{SH1}).
In (\ref{bDffdot}), the relation (\ref{logtrans}) is used for the PDF ratio $\varphi:= \frac{f}{f_*}$, whose logarithm  $\theta = \ln \varphi$ is decomposed into the sum in the second equality in (\ref{theta}) under the conditions (\ref{gpos}), (\ref{hpos}). 
Since $\xi$ on the right-hand side of (\ref{theta}) is independent of $p$, it does not  contribute to $\d_p \theta =\d_p \eta $, whereby (\ref{bDffdot}) establishes (\ref{bDdot}).
\end{proof}

In view of the dissipation inequality (\ref{bDdot}), the relative entropy $\cF$ does not increase in time and can be used as a Lyapunov functional in the context of convergence of the state PDF  $f$ to $f_*$. However, the right-hand side of (\ref{bDdot}) is negative only when $h$ deviates from $h_*$.
The monotonicity of the ``total'' relative entropy $\cF$ in (\ref{bDfgh}) does not extend to its components $\cG$, $\cH$ in (\ref{cG}), (\ref{cH}).  As the following lemma shows, the time derivative of $\cG$ is organised as a bilinear (rather than a definite quadratic) form in $\xi$, $h$.

\begin{lem}
\label{lem:cGdot}
Under the assumptions of Lemma~\ref{lem:bDdot}, the position entropy (\ref{cG}) satisfies
\begin{align}
\label{cGdot}
  \dot{\cG}
  & =
  \bE (\gamma^\rT  M^{-1}\d_q\xi ),\\
\label{cGddot}
      \ddot{\cG}
  & =
  \bE
  \Big(
    \d_q\xi^\rT M^{-1} P\d_t \eta -  \gamma^\rT  M^{-1}\d_q
    \Big( \frac{1}{g_*}\div_q(g_*M^{-1}\gamma)\Big) \Big),
\end{align}
where $\xi$ is the corresponding logarithmic PDF ratio in (\ref{xi}), and $\gamma$ is the conditional momentum  mean from (\ref{EPQ}).  
\end{lem}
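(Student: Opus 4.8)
The plan is to compute $\dot\cG$ and $\ddot\cG$ directly from the definition \eqref{cG}, $\cG(t)=\int_S g(t,q)\xi(t,q)\rd q$, using the PDE \eqref{gdot} for $\d_t g$ and the identity $\d_t\xi=\d_t\ln g$ (since $g_*$ is time-independent). First I would differentiate: $\dot\cG=\int_S(\d_t g\,\xi + g\,\d_t\xi)\rd q = \int_S(\d_t g)\,\xi\rd q + \int_S(\d_t g)\rd q$, and the second integral vanishes by normalization of $g$. Substituting \eqref{gdot}, $\dot\cG = -\int_S \xi\,\div_q(gM^{-1}\gamma)\rd q$, and integrating by parts (the boundary term vanishing under the standing decay/periodicity assumptions used throughout Section~\ref{sec:PDF}) gives $\dot\cG = \int_S g\,\d_q\xi^\rT M^{-1}\gamma\rd q = \bE(\gamma^\rT M^{-1}\d_q\xi)$, which is \eqref{cGdot}. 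Note here that $\bE(\cdot)$ of a $q$-only function is $\int_S g(t,q)(\cdot)\rd q$, and that $M$ is symmetric so $\gamma^\rT M^{-1}\d_q\xi = \d_q\xi^\rT M^{-1}\gamma$.

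For $\ddot\cG$ I would differentiate \eqref{cGdot} written as $\dot\cG=\int_S g\,\d_q\xi^\rT M^{-1}\gamma\rd q$, producing three terms: one from $\d_t g$, one from $\d_t\d_q\xi = \d_q\d_t\xi$, and one from $\d_t\gamma$. The cleanest route is instead to use the logarithmic form \eqref{gdot3}, $\d_t\xi = \d_t\ln g = -\gamma^\rT M^{-1}\d_q\ln g - \div_q(M^{-1}\gamma)$, and rewrite this entirely in terms of $\xi$: since $\d_q\ln g = \d_q\xi + \d_q\ln g_*$, we get $\d_t\xi = -\gamma^\rT M^{-1}(\d_q\xi+\d_q\ln g_*) - \div_q(M^{-1}\gamma) = -\gamma^\rT M^{-1}\d_q\xi - \tfrac{1}{g_*}\div_q(g_*M^{-1}\gamma)$, using $\gamma^\rT M^{-1}\d_q\ln g_* + \div_q(M^{-1}\gamma) = \tfrac{1}{g_*}\div_q(g_*M^{-1}\gamma)$. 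Then in $\ddot\cG = \int_S(\d_t g\,\d_q\xi^\rT M^{-1}\gamma + g\,\d_t(\d_q\xi^\rT M^{-1}\gamma))\rd q$, I expect integration by parts against the $\d_t g = -\div_q(gM^{-1}\gamma)$ term to recombine with part of $g\,\d_t(\cdots)$; tracking the $\d_t\gamma$ contribution and using the identity for $\d_t\xi$ just derived should leave exactly the two stated terms — the first, $\bE(\d_q\xi^\rT M^{-1}P\,\d_t\eta)$, arising because $\d_t\gamma$ (by \eqref{EPQ}, which integrates $h$ against $p$) pulls in $\d_t h = h\,\d_t\eta$ plus lower-order pieces, and $\bE(\d_q\xi^\rT M^{-1}P\,(\cdot)) = \int_S g\,\d_q\xi^\rT M^{-1}\int_{\mR^n} hp(\cdot)\rd p\rd q$; the second term is the $-\gamma^\rT M^{-1}\d_q\big(\tfrac{1}{g_*}\div_q(g_*M^{-1}\gamma)\big)$ piece coming straight from differentiating $\d_q\xi$ via the $\d_t\xi$ formula.

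The main obstacle will be the bookkeeping in $\ddot\cG$: there are several candidate terms (from $\d_t g$, from $\d_t\d_q\xi$, from $\d_t\gamma$, and from the time-dependence of $\gamma$ inside $\div_q(g_*M^{-1}\gamma)$) that must be shown to cancel or recombine into precisely the two displayed ones. The key technical points are (i) commuting $\d_t$ and $\d_q$ on $\xi$ (justified by the assumed $C^{1,2}$ regularity), (ii) using the $\d_t\xi$ identity above to eliminate $\d_q\ln g$ in favour of $g_*$-weighted divergences, (iii) handling $\d_t\gamma$ via \eqref{EPQ} together with $\d_t f = \d_t(gh) = g\,\d_t h + h\,\d_t g$ and then $\d_t h = h\,\d_t\eta + (\text{terms that integrate against }p\text{ to give }\gamma\text{-dependent pieces absorbed into the second term})$, and (iv) discarding all boundary terms in $q$ (on $\mR^{n-r}$-factors) using the same decay hypotheses invoked for \eqref{gdot1} and \eqref{EphiQdot1}. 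I would organise the computation so that the $\d_p\eta$-type dissipation never appears here — consistent with \eqref{cGdot}–\eqref{cGddot} being bilinear, not sign-definite — and cross-check the final expression against $\dot\cF = \dot\cG+\dot\cH$ from \eqref{bDfgh} as a consistency test.
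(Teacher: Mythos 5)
Your proposal is correct and takes essentially the same route as the paper: the paper phrases the transport term probabilistically (chain rule along $Q(t)$ plus the tower property of conditional expectations) where you phrase it analytically (differentiating $\int_S g\xi\,\rd q$ and integrating by parts against (\ref{gdot})), but the key ingredients — $g\d_t\xi=\d_t g$, the rearrangement $\d_t\xi+\gamma^\rT M^{-1}\d_q\xi=-\frac{1}{g_*}\div_q(g_*M^{-1}\gamma)$, and $\d_t\gamma=\int_{\mR^n}h\,p\,\d_t\eta\,\rd p$ — are identical to (\ref{xixi}), (\ref{gammadot}). The bookkeeping you left as ``expected'' does close: integrating the $\d_t g$ contribution by parts via (\ref{gdot}) gives $+\int_S g\,\gamma^\rT M^{-1}\d_q(\gamma^\rT M^{-1}\d_q\xi)\,\rd q$, which exactly cancels the matching piece produced by $\d_t\d_q\xi$ through your $\d_t\xi$ identity, leaving precisely the two displayed terms of (\ref{cGddot}).
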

\begin{proof}
Similarly to the proofs of Lemmas~\ref{lem:gdot} and \ref{lem:bDdot},
the time derivative of (\ref{cG}) can be computed by using (\ref{SH1}) as 
\begin{align}
\nonumber
    \dot{\cG}
    & =
    (
    \bE
    \xi(t,Q(t))
    )^{^\centerdot}
    =
    \bE
    (
    \d_t \xi
    +
    \d_q\xi^\rT \dot{Q}
    ) =
    \bE
    (
    \d_t \xi
    +
    \d_q\xi^\rT M^{-1}
    \bE(P\mid Q)
    )\\
\label{bDggdot}
    & =
    \int_S
    (
    \d_t g
    +
    g
    \d_q\xi^\rT M^{-1}    \gamma
    )
    \rd q
    =
    \int_S
    g
    \d_q\xi^\rT M^{-1}    \gamma
    \rd q,
\end{align}
which establishes (\ref{cGdot}). The second last equality in (\ref{bDggdot}) employs
the relation $g\d_t \xi = g\d_t \ln g = \d_t g$ obtained from (\ref{xi})  and the time independence of the invariant position PDF $g_*$. This is combined with
$\int_S \d_t g\rd q = \d_t \int_S g\rd q = 0$ due to the normalization property
\begin{equation}
\label{gnorm}
     \int_S
    g(t,q)\rd q = 1
\end{equation}
of the PDF $g$ for any $t\> 0$.
Now, by differentiating (\ref{cGdot}) in time and using the time independence of the mass matrix $M$ (whereby $\d_t M = 0$), it follows, similarly to (\ref{bDggdot}),   that
\begin{align}
\nonumber
  \ddot{\cG}
  & =
  \bE (\d_t(\gamma^\rT  M^{-1}\d_q\xi) + \dot{Q}^\rT\d_q(\gamma^\rT  M^{-1}\d_q\xi)  )\\
\nonumber
  & =
  \bE (
    \d_t\gamma^\rT  M^{-1}\d_q\xi
    +
    \gamma^\rT  M^{-1}\d_t\d_q\xi
    +
    \gamma^\rT  M^{-1}\d_q(\gamma^\rT  M^{-1}\d_q\xi)  )\\
\label{cGddot1}
  & =
  \bE (
    \d_t\gamma^\rT  M^{-1}\d_q\xi
    +
    \gamma^\rT  M^{-1}\d_q(\d_t \xi+ \gamma^\rT  M^{-1}\d_q\xi)  ),
\end{align}
where $\d_t\d_q\xi = \d_q\d_t\xi  $  due to the interchangeability of differentiation over  time and the position variables. Also, since $\d_t h = h\d_t \ln h = h \d_t \eta$ by the time independence of the invariant conditional momentum PDF $h_*$ in (\ref{eta}), the time differentiation of (\ref{EPQ})  yields
\begin{equation}
\label{gammadot}
    \d_t \gamma
    =
    \int_{\mR^n}
    p\d_t h \rd p
    =
    \int_{\mR^n}
    p h \d_t \eta \rd p
    =
    \bE (P \d_t \eta \mid Q=q).
\end{equation}
By combining (\ref{gdot3}) with the identity $\ln g = \xi + \ln g_*$ in view of (\ref{xi}), it follows that
$$
\d_t \xi
  =
  \d_t \ln g
  =
  - \gamma^\rT M^{-1}\d_q (\xi + \ln g_*) - \div_q (M^{-1}\gamma),
$$
and hence,
\begin{align}
\nonumber
    \d_t \xi+ \gamma^\rT  M^{-1}\d_q\xi
    & =
    - \gamma^\rT M^{-1}\d_q \ln g_* - \div_q (M^{-1}\gamma)\\
\label{xixi}
    & =
    - \frac{1}{g_*}\div_q(g_*M^{-1}\gamma).
\end{align}
Substitution of (\ref{gammadot}), (\ref{xixi}) into the right-hand side of (\ref{cGddot1}) establishes (\ref{cGddot}).
\end{proof}

While the monotonicity of the position-momentum entropy $\cF$ in time is not shared by its components $\cG$, $\cH$ in (\ref{bDfgh}), it makes them  influence each other. Indeed, since
\begin{equation}
\label{FGHdot}
    \dot{\cG} + \dot{\cH} = \dot{\cF} \< 0,
\end{equation}
the  time derivatives $\dot{\cG}$, $\dot{\cH}$ cannot be simultaneously positive,
so that an increase in one of the entropies $\cG$, $\cH$ causes a decrease in the other as if there were an entropy flow between them.  
This ``waterbed'' effect is particularly prominent in the vicinity of those moments of time when the relative entropy dissipation rate in (\ref{bDdot}) vanishes.
The set
\begin{equation}
\label{fS}
    \fS:= \{t>0:\ \dot{\cF}(t)=0\}
\end{equation}
(more precisely, its scattered structure) 
plays an important role in the convergence of the system to equilibrium.  The local behaviour of the entropies $\cF$, $\cG$, $\cH$ and the PDF $h$ at such instants is discussed below.

\begin{lem}
\label{lem:fS}
Under the conditions of Lemma~\ref{lem:bDdot}, at any entropy dissipation break time described by (\ref{fS}),  the relative entropies $\cF$, $\cG$, $\cH$ in (\ref{cF}), (\ref{cG}), (\ref{cH}) satisfy
\begin{align}
\label{FFsign}
    \ddot{\cF}
    & = 0,
    \qquad
    \dddot{\cF}\< 0,\\
\label{GGsign}
    \dot{\cG}
    & = 0,
    \qquad
    \, \ddot{\cG}=
    -\ddot{\cH} \<  0, \\
\label{HHsign}
    \cH & = 0,
    \qquad
    \dot{\cH} = 0,
    \qquad
    \ddot{\cH} \> 0,
    \qquad
    t \in \fS.
\end{align}
\end{lem}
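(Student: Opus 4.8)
The plan is to work out the low-order Taylor coefficients of $\cF$, $\cG$, $\cH$ at a break time $t\in\fS$, using the dissipation identity (\ref{bDdot}) as the master relation and the decomposition $\cF=\cG+\cH$ from (\ref{bDfgh}). The starting observation is that by Lemma~\ref{lem:bDdot} the condition $\dot{\cF}(t)=0$ forces $\bE(\|\d_p\eta\|_D^2)=0$; since $D\succ 0$ everywhere and $f>0$, this means $\d_p\eta(t,\cdot,\cdot)=0$ identically on $S\x\mR^n$, i.e.\ $\eta(t,q,p)$ does not depend on $p$. But $h(\cdot\mid q,t)$ and $h_*(\cdot\mid q)$ are both probability densities in $p$, so a $p$-independent logarithmic ratio is only possible if $\eta\equiv 0$, i.e.\ $h(\cdot\mid q,t)=h_*(\cdot\mid q)$ for every $q$. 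This immediately gives $\cH(t)=\bD(h(\cdot\mid\cdot,t)\|h_*)=0$ and $\dot{\cH}(t)=0$ (a nonnegative functional vanishing at a point has zero derivative there, or alternatively one differentiates (\ref{cH}) and uses $\eta=\d_p\eta=0$); together with (\ref{FGHdot}) this yields $\dot{\cG}(t)=\dot{\cF}(t)-\dot{\cH}(t)=0$, establishing the first equalities in (\ref{GGsign}) and (\ref{HHsign}), and also $\gamma(t,q)=\bE_*(P\mid Q=q)=0$ by (\ref{E*PQ}).

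Next I would differentiate (\ref{bDdot}) to get $\ddot{\cF}$. Since $\dot{\cF}=-\tfrac12\bE(\|\d_p\eta\|_D^2)$ and $\d_p\eta$ vanishes identically at time $t$, the product rule shows every term in $\ddot{\cF}$ carries a factor of $\d_p\eta$ and hence $\ddot{\cF}(t)=0$; this is the first half of (\ref{FFsign}). For the second half I would differentiate once more: the only surviving term at a point where $\d_p\eta=0$ is $-\bE(\|\d_p\d_t\eta\|_D^2)\le 0$, giving $\dddot{\cF}(t)\le 0$. (Here one must be slightly careful that the bracket $\bra D,\cdot\ket$-type cross terms really do all vanish; the point is that $\|\d_p\eta\|_D^2$ is quadratic and homogeneous of degree two in $\d_p\eta$, so its first two time derivatives at a zero of $\d_p\eta$ are $0$ and $-\|\d_p\d_t\eta\|_D^2$ respectively, up to the explicit $\d_t D=0$ which holds since $D$ is time-independent.) For the sign claims in (\ref{HHsign}) and (\ref{GGsign}) on the second derivatives: since $\cH(t)=0$ is the minimum of the nonnegative functional $\cH$, automatically $\ddot{\cH}(t)\ge 0$; and from $\ddot{\cF}=\ddot{\cG}+\ddot{\cH}$ with $\ddot{\cF}(t)=0$ we get $\ddot{\cG}(t)=-\ddot{\cH}(t)\le 0$, which is exactly (\ref{GGsign}).

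The cleaner route to $\ddot{\cH}(t)$ (and a cross-check) is through Lemma~\ref{lem:cGdot}: formula (\ref{cGddot}) evaluated at $t\in\fS$ simplifies drastically because $\gamma(t,\cdot)=0$, killing the second term, so $\ddot{\cG}(t)=\bE(\d_q\xi^\rT M^{-1}P\,\d_t\eta)$. To see this is $\le 0$ I would express $\d_t\eta$ at the break time: from Lemma~\ref{lem:hdot}, using $h=h_*$, $\eta=0$, $\d_p\eta=0$, $\gamma=0$, the PDE (\ref{hdot}) reduces to an explicit expression for $\d_t h = h_*\,\d_t\eta$ in which the only $\xi$-dependent piece is the $-\varpi^\rT M^{-1}\d_q\ln g = -p^\rT M^{-1}\d_q\ln g$ term (since $\gamma=0$ gives $\varpi=p$), i.e.\ $\d_t\eta = (\text{$p$-independent in the sense of not involving }\xi) - p^\rT M^{-1}\d_q\xi$ plus terms that integrate against $h_* p$ to zero. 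Substituting into $\ddot{\cG}(t)=\bE(\d_q\xi^\rT M^{-1}P\,\d_t\eta)$ and taking the conditional expectation over $P$ given $Q$ against $h_*$, the odd-in-$p$ terms survive and one is left with $\ddot{\cG}(t) = -\bE(\d_q\xi^\rT M^{-1}\,\bE_*(PP^\rT\mid Q)\,M^{-1}\d_q\xi) = -\cT\,\bE(\|M^{-1}\d_q\xi\|_M^2) = -\cT\,\bE(\d_q\xi^\rT M^{-1}\d_q\xi)\le 0$ using (\ref{E*PPQ}), which both reproves $\ddot{\cG}(t)\le 0$ and gives $\ddot{\cH}(t)=\cT\,\bE(\|\d_q\xi\|_{M^{-1}}^2)\ge 0$ with a transparent interpretation. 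The main obstacle I anticipate is the bookkeeping in this last computation — correctly identifying which terms of $\d_t\eta$ are odd versus even in the conditionally centered momentum and verifying that the even parts drop out under $\bE_*(\cdot\mid Q)$ against the Gaussian $h_*$ — together with the care needed to justify differentiating under the expectation/integral sign (decay of the relevant momentum integrands, as flagged after Lemma~\ref{lem:gdot}). Everything else is a direct consequence of $\eta$, $\d_p\eta$, and $\gamma$ all vanishing at $t\in\fS$.
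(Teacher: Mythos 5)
Your proof is correct and follows essentially the paper's route: you identify $\dot{\cF}(t)=0$ with $h(\cdot\mid\cdot,t)=h_*$ everywhere (hence $\cH(t)=0$ and $\gamma=0$), then use the nonnegativity of $\cH$ at its global minimum and the additivity $\cF=\cG+\cH$ to obtain all the stated signs, which is exactly the paper's argument. The only variation is that for $\ddot{\cF}=0$ and $\dddot{\cF}\leqslant 0$ the paper simply invokes the second-order necessary conditions at an interior global maximum of the nonpositive function $\dot{\cF}$, whereas you differentiate the dissipation identity (\ref{bDdot}) directly; this computation is sound and coincides with what the paper does later in Lemma~\ref{lem:F...}, so your closing cross-check via (\ref{cGddot}) merely anticipates Lemmas~\ref{lem:hdot*} and \ref{lem:F...} rather than being needed here.
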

\begin{proof}
Since $\dot{\cF}(t)=0$ at any $t\in \fS$, then the function $\dot{\cF}$, which is nonpositive everywhere  in view of (\ref{bDdot}),   achieves its global maximum  value $0$ at any such moment of time. Hence, the first two derivatives of $\dot{\cF}$ satisfy (\ref{FFsign}) as necessary conditions of the maximum. In view of (\ref{gpos}), (\ref{bDdot}), for any $t\> 0$,  the property $\dot{\cF}(t)=0$ holds if and only if $h(\cdot \mid \cdot, t) = h_*$ everywhere in the phase space $S\x \mR^n$. Due to (\ref{cH}), this allows the set (\ref{fS}) to be represented as
\begin{equation}
\label{fS1}
    \fS= \{t>0:\ \cH(t)=0\}.
\end{equation}
Therefore, at any $t \in \fS$, the nonnegative function $\cH$ achieves its global minimum value $0$, whereby its first two derivatives satisfy (\ref{HHsign}) as necessary conditions of the minimum. For any $t \in \fS$, the inequality in (\ref{FGHdot}) becomes  an equality,  and hence, $\dot{\cG} = - \dot{\cH} = 0$ in view of the second equality in (\ref{HHsign}). By a similar reasoning, the relation $\ddot{\cG} = \ddot{\cF}- \ddot{\cH}$, which follows from (\ref{bDfgh}), leads to $\ddot{\cG}(t) = -\ddot{\cH}(t)\< 0$ in (\ref{GGsign}) for any $t\in \fS$ due to the first equality in (\ref{FFsign}).
\end{proof}

As mentioned in the proof of Lemma~\ref{lem:fS} in regard to (\ref{fS1}), the conditional momentum PDF $h(\cdot \mid \cdot, t)$ coincides with its invariant counterpart $h_*$ at every $t \in \fS$ from (\ref{fS}) (and hence, the momentum PDF $\varrho(t,\cdot)$ in (\ref{momPDF}) coincides with $\wh{\varrho}(t,\cdot)$ in (\ref{momPDFhat}) everywhere in $\mR^n$ at any such time $t$, which can also be obtained by combining  (\ref{rhoL1}) with (\ref{fS1})).  Therefore, for any $t\in \fS$, the conditional momentum mean (\ref{EPQ}) vanishes everywhere (that is, $\gamma(t,q)=0$ for all $q\in S$)  in accordance with (\ref{E*PQ}) and hence, so also does $\d_t g$ for the position PDF $g$ in view of (\ref{gdot}); see Figure~\ref{fig:gh}. \begin{figure}[htbp]
{\centering
\includegraphics[width=7 cm]{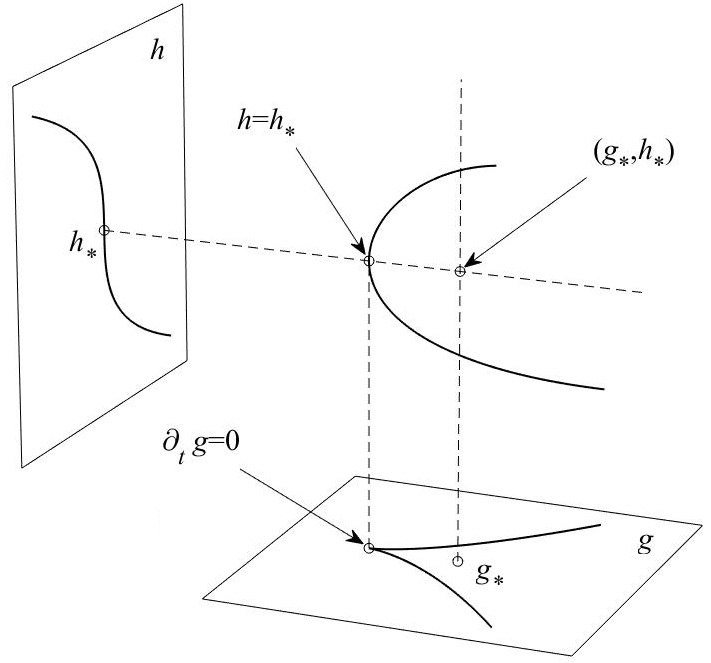}
\caption{An illustration of the local behaviour of the pair of the position and conditional momentum PDFs $g$, $h$ (as a trajectory in an infinite dimensional $(g,h)$-manifold) in a neighbourhood of an entropy dissipation break time (when $h=h_*$). At any such moment of time, $\d_t g = 0$.
\label{fig:gh}}}
\end{figure}
Moreover, for any $t\in \fS$, not only $\dot{\cF} = 0$, but also $(\bE H)^{^\centerdot} = 0$ in view of (\ref{EHdotD}) since a combination of (\ref{Sbeta}) with (\ref{E*PPQ}) for the PDF $h_*$  in (\ref{h*}) implies that $I_n-\beta\bE_*(PP^{\rT}\mid Q)M^{-1} = 0$.  At the same time, as the following lemma shows, $h$ keeps evolving unless $g= g_*$.

\begin{lem}
\label{lem:hdot*}
Under the conditions of Lemma~\ref{lem:bDdot}, at any entropy dissipation break time in (\ref{fS}),  
\begin{equation}
\label{lnhdot*}
  \d_t \eta
  =
  -p^\rT M^{-1} \d_q \xi,
  \qquad
  t \in \fS,\
  q \in S,\
  p \in \mR^n,
\end{equation}
where $\xi$, $\eta$ are the logarithmic PDF ratios (\ref{xi}), (\ref{eta}) associated with the position and conditional momentum PDFs $g$, $h$  and their invariant counterparts $g_*$, $h_*$  in (\ref{g*}), (\ref{h*}). 
\end{lem}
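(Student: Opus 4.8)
The key structural fact is that at any $t\in\fS$ one has $h(\cdot\mid\cdot,t)=h_*$ everywhere (as recorded around (\ref{fS1})), so the conditional momentum mean (\ref{EPQ}) vanishes, $\gamma(t,\cdot)=0$; hence the conditionally centered momentum (\ref{varpi}) reduces to $\varpi(t,q,p)=p$ and $\div_q(M^{-1}\gamma)=0$ at such an instant. Since $h_*$ is time-independent, differentiating (\ref{eta}) gives $\d_t\eta=\frac1h\d_t h$, which at $t\in\fS$ is $\d_t\eta=\frac1{h_*}\d_t h$. So it remains to evaluate the right-hand side of the PDE (\ref{hdot}) at $h=h_*$, $\gamma=0$.

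First I would substitute $h=h_*$, $\gamma=0$ into (\ref{hdot}), obtaining, for $t\in\fS$,
$$
  \d_t h
  =
  \{H,h_*\}
  + p^\rT M^{-1}F\d_p h_*
  + \tfrac12\bra D,\d_p^2 h_*\ket
  + \big(\bra F,M^{-1}\ket - p^\rT M^{-1}\d_q\ln g\big)h_*.
$$
Now I would compare this with the same substitution performed at the \emph{full} equilibrium, where in addition $g=g_*$ and $\d_t h_*=0$ (recall $h_*$ is a steady-state solution of (\ref{hdot}), as noted in the proof of Lemma~\ref{lem:FD}); that gives the identity
$$
  0 = \{H,h_*\} + p^\rT M^{-1}F\d_p h_* + \tfrac12\bra D,\d_p^2 h_*\ket + \big(\bra F,M^{-1}\ket - p^\rT M^{-1}\d_q\ln g_*\big)h_*.
$$
Subtracting, every term cancels except the one involving the position PDF, leaving $\d_t h=-p^\rT M^{-1}\d_q(\ln g-\ln g_*)\,h_*=-p^\rT M^{-1}(\d_q\xi)h_*$ in view of (\ref{xi}); dividing by $h_*$ yields (\ref{lnhdot*}).

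For a self-contained derivation not relying on the equilibrium PDE, one can instead note that the damping and momentum-diffusion terms evaluated at $h_*$, namely $p^\rT M^{-1}F\d_p h_*+\bra F,M^{-1}\ket h_*+\tfrac12\bra D,\d_p^2 h_*\ket$, sum to zero by the stationarity relation (\ref{FPKEh*}) together with $\d_p h_*=h_*\d_p\ln h_*$ and the logarithmic identities (\ref{lnphi'}); and that the Poisson bracket term follows from $\{H,f_*\}=0$, the factorisation $f_*=g_*h_*$ and the derivation property of (\ref{Poiss}) as $\{H,h_*\}=-\{H,\ln g_*\}h_*=p^\rT M^{-1}(\d_q\ln g_*)h_*$, using that $g_*$ is $p$-independent and $\d_p H=M^{-1}p$ by (\ref{T}), (\ref{HTV}). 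Combining the two gives $\d_t h=-p^\rT M^{-1}(\d_q\xi)h_*$ as before.

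The calculation is essentially bookkeeping, and I do not anticipate a genuine obstacle. The one point deserving attention is the bilinear (integro-differential) structure of (\ref{hdot}) through $\gamma$: it is exactly because $\gamma$ vanishes at the entropy-break instants that the quadratic nonlinearity and the $\div_q(M^{-1}\gamma)$ term drop out, so that the evolution of $\eta$ collapses to the purely linear expression $-p^\rT M^{-1}\d_q\xi$ in the position logarithmic PDF ratio.
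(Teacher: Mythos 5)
Your proof is correct, and it reaches (\ref{lnhdot*}) by a genuinely different mechanism for the key cancellation than the paper does. Both arguments start the same way: at $t\in\fS$ one has $h=h_*$, hence $\gamma=0$, and one evaluates the conditional-momentum PDE at this configuration. The paper then computes everything explicitly from the Gaussian form of $h_*$ -- it inserts $\d_p\ln h_*=-\beta M^{-1}p$, $\d_p^2\ln h_*=-\beta M^{-1}$ and the logarithmic position gradient (\ref{dlnhdq}) into (\ref{lnhdot}), and cancels the damping, diffusion and ``centrifugal'' contributions of $\d_q H$ term by term using the damping--diffusion relation (\ref{FD}), finally identifying $\d_q\ln g_*$ to produce $-p^\rT M^{-1}\d_q\xi$. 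You instead subtract the same equation evaluated at the full equilibrium pair $(g_*,h_*)$ (where the right-hand side of (\ref{hdot}) vanishes), so that every $g$-independent term drops out wholesale and only $-p^\rT M^{-1}\d_q(\ln g-\ln g_*)\,h_*$ survives; your backup derivation via the stationarity identity (\ref{FPKEh*}), the factorisation $\{H,h_*\}=-\{H,\ln g_*\}h_*$ and (\ref{lnphi'}) makes that subtraction a pointwise identity rather than a statement along a particular solution, which is the one small point that needed care. What your route buys is economy: it avoids the explicit formula (\ref{dlnhdq}) and all the $\beta$-bookkeeping, and it makes transparent that (\ref{lnhdot*}) is forced purely by the invariance of $f_*$ (i.e.\ by (\ref{FD}) through Lemma~\ref{lem:FD}) together with the $p$-linearity of $\d_p H^\rT\d_q(\cdot)$; what the paper's explicit computation buys is the intermediate formulas ((\ref{hgradHess*}), (\ref{dlnhdq}), the expression for $\d_q\ln g_*$) that it reuses elsewhere. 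Your handling of the break-time facts ($h=h_*$, $\gamma=0$, hence $\varpi=p$ and $\div_q(M^{-1}\gamma)=0$, and $\d_t\eta=\d_t\ln h$ by time-independence of $h_*$) matches the paper and is sound.
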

\begin{proof}
Under the condition (\ref{gpos}) on the position PDF $g$, for  any conditional momentum PDF $h>0$,  by dividing both parts of the third equality in (\ref{hdot1}) by $h$ and using (\ref{Hlng}), it follows that the logarithmic time derivative of $h$ takes the form
\begin{align}
\nonumber
    \d_t \ln h
    = &
        \{H,\ln h\}
    +
    p^\rT M^{-1} F \d_p\ln h
    +
    \frac{1}{2}
    \bra
        D, \d_p^2 \ln h
    \ket
    +
    \frac{1}{2}
    \|\d_p \ln h\|_D^2\\
\label{lnhdot}
    & +
    \bra
        F, M^{-1}
    \ket
    -p^\rT M^{-1}\d_q \ln g
    -\d_t \ln g,
\end{align}
where use is also made of  (\ref{lnphi'}).  
We will now evaluate the right-hand side of (\ref{lnhdot}) at the invariant conditionally Gaussian momentum PDF $h(\cdot \mid \cdot, t)=h_*$ from (\ref{h*}), which corresponds to the case when $t\in \fS$.
The logarithmic gradient vector of $h_*$ with respect to the position variables is computed as
\begin{equation}
\label{dlnhdq}
    \d_q \ln h_*
     =
    \frac{1}{2}
    (
    \beta
    p^\rT M_k p -
    \bra
        M^{-1},
        \d_{q_k} M
    \ket)_{1\< k \< n},
\end{equation}
where the $\mS_n$-valued maps $M_1, \ldots, M_n$ from (\ref{muk}) are used. 
Also, since $h_*$ in (\ref{h*}) has zero conditional momentum mean (\ref{E*PQ}), so that  $\gamma=0$ in (\ref{EPQ}), then, in view of (\ref{gdot3}),
\begin{equation}
\label{gdot0}
  \d_t \ln g = 0,
  \qquad
  t \in \fS,\
  q \in S.
\end{equation}
By substituting (\ref{hgradHess*}), (\ref{dlnhdq}), (\ref{gdot0}) into the right-hand side of (\ref{lnhdot}) and using the damping-diffusion condition (\ref{FD}), it follows that
\begin{align}
\nonumber
    \d_t \ln h
    =&
    \d_q H^{\rT}\d_p \ln h_* - \d_p H^{\rT}\d_q \ln h_*\\
\nonumber
    & +
    p^\rT M^{-1} F \d_p\ln h_*
    +
    \frac{1}{2}
    \bra
        D, \d_p^2 \ln h_*
    \ket
    +
    \frac{1}{2}
    \|\d_p \ln h_*\|_D^2
    \\
\nonumber
     & +
         \bra
        F, M^{-1}
    \ket
    -p^\rT M^{-1}\d_q \ln g,\\
\nonumber
    =&
    -\beta  \d_q H^\rT M^{-1}p
    -
    \frac{1}{2}
    p^\rT M^{-1}
    (
    \beta
    p^\rT M_k p -
    \bra
        M^{-1},
        \d_{q_k} M
    \ket)_{1\< k \< n}\\
\nonumber
    &
    -\beta
    \|M^{-1}p\|_F^2
    -
    \frac{1}{2}\beta
    \bra
        D, M^{-1}
    \ket
    +
    \frac{1}{2}\beta^2
    \|M^{-1}p\|_D^2
    \\
\nonumber
     & +
         \bra
        F, M^{-1}
    \ket
    -p^\rT M^{-1}\d_q \ln g\\
\nonumber
    =&
    -\beta  \Big(    V'
    -
    \frac{1}{2}
    (
        p^{\rT}M_k p
    )_{1\< k\< n}
    \Big)^\rT M^{-1}p\\
\nonumber
    & -
    \frac{1}{2}
    p^\rT M^{-1}
    (
    \beta
    p^\rT M_k p -
    \bra
        M^{-1},
        \d_{q_k} M
    \ket)_{1\< k \< n}
    -p^\rT M^{-1}\d_q \ln g\\
\nonumber
    = &
    p^\rT M^{-1}
    \Big(
        \frac{1}{2}
    (\bra
        M^{-1},
        \d_{q_k} M
    \ket)_{1\< k \< n}    -\beta V'
            -\d_q \ln g
    \Big)\\
\label{Gausslnhdot}
    = &
    -p^\rT M^{-1}
    \d_q\xi,
\end{align}
which establishes (\ref{lnhdot*}) since $\d_t \eta=\d_t \ln h$ by the time independence of the invariant conditional momentum PDF $h_*$ in (\ref{eta}).  The last equality in (\ref{Gausslnhdot}) is obtained by combining the relation
\begin{equation*}
\label{dqlng*}
    \d_q \ln g_*
    =
        \frac{1}{2}
    (\bra
        M^{-1},
        \d_{q_k} M
    \ket)_{1\< k \< n}    -\beta V'
\end{equation*}
for the invariant position PDF $g_*$ in (\ref{g*}) with $ \xi = \ln g - \ln g_*$ which follows from (\ref{xi}).
\end{proof}

The relation (\ref{lnhdot*}) shows that for any time $t \in \fS$ (that is, when $h=h_*$),  the time derivative of the conditional momentum PDF $h$ vanishes everywhere in the phase space $S\x \mR^n$ if and only if $g=g_*$ everywhere in $S$, in which case the system state has the equilibrium PDF $f=f_*$.    The following lemma enhances the inequalities in (\ref{FFsign})--(\ref{HHsign}).

\begin{lem}
\label{lem:F...}
Under the conditions of Lemma~\ref{lem:bDdot}, at any entropy dissipation break time in (\ref{fS}),  the third and second-order  time derivatives of the entropies $\cF$, $\cH$  in (\ref{cF}), (\ref{cH}) satisfy
\begin{align}
\label{F...}
  \dddot{\cF}
  & =
  -
  \bE
  (
    \|
        M^{-1}\d_q \xi
    \|_D^2
  ),\\
\label{H..}
  \ddot{\cH}
  & =
  \cT
  \bE
  (
    \|
        \d_q \xi
    \|_{M^{-1}}^2
  ),
  \qquad
  t \in \fS,
\end{align}
where $\xi$ is the logarithmic PDF ratio (\ref{xi}) for the position PDF $g$ and its invariant counterpart $g_*$ in (\ref{g*}). 
\end{lem}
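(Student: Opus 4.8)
The plan is to obtain both identities by differentiating the dissipation formula of Lemma~\ref{lem:bDdot} and the second-derivative formula of Lemma~\ref{lem:cGdot}, while exploiting the key structural fact (already used in the proof of Lemma~\ref{lem:fS}) that at an entropy break time $t\in\fS$ one has $h(\cdot\mid\cdot,t)=h_*$, so that the logarithmic ratio $\eta(t,\cdot,\cdot)$ vanishes identically on the slice $\{t\}\times S\times\mR^n$ \emph{together with all of its spatial derivatives} $\d_q^a\d_p^b\eta$, whereas its \emph{temporal} derivatives $\d_t\eta$, $\d_t\d_p\eta$ are precisely the nonzero objects furnished by Lemma~\ref{lem:hdot*}; likewise the conditional momentum mean satisfies $\gamma(t,\cdot)=0$ on $S$. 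Throughout, interchange of $\d_t$ with the integral over $S\times\mR^n$ is justified by the $C^{1,2}$ regularity hypotheses, exactly as in the proofs above.

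For $\dddot{\cF}$, write $\dot{\cF}(t)=-\tfrac12\int_{S\times\mR^n}f(t,x)\,\Phi(t,x)\,\rd x$ with $\Phi:=\|\d_p\eta\|_D^2$, and differentiate twice in $t$ under the integral sign. Since $D=D(q)$ is $t$-independent, $\d_t\Phi=2\,\d_p\eta^\rT D\,\d_p\d_t\eta$ and $\d_t^2\Phi=2\,\|\d_p\d_t\eta\|_D^2+2\,\d_p\eta^\rT D\,\d_p\d_t^2\eta$; at $t\in\fS$, where $\d_p\eta=0$, this leaves $\Phi=0$, $\d_t\Phi=0$, $\d_t^2\Phi=2\|\d_p\d_t\eta\|_D^2$, so all terms containing $\d_t f$ or $\d_t^2 f$ drop out and $\dddot{\cF}=-\bE\big(\|\d_p\d_t\eta\|_D^2\big)$. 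By Lemma~\ref{lem:hdot*}, on the slice $t\in\fS$ we have $\d_t\eta=-p^\rT M^{-1}\d_q\xi$, and taking the $p$-gradient (using the $p$-independence of $M$ and $\d_q\xi$) gives $\d_p\d_t\eta=-M^{-1}\d_q\xi$; substitution yields $\dddot{\cF}=-\bE\big(\|M^{-1}\d_q\xi\|_D^2\big)$, which is~(\ref{F...}).

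For $\ddot{\cH}$, start from $\cF=\cG+\cH$ and $\ddot{\cF}(t)=0$ on $\fS$ (Lemma~\ref{lem:fS}), so $\ddot{\cH}=-\ddot{\cG}$. In the expression~(\ref{cGddot}) for $\ddot{\cG}$, the term carrying the prefactor $\gamma^\rT$ vanishes because $\gamma(t,\cdot)=0$, leaving $\ddot{\cG}=\bE\big(\d_q\xi^\rT M^{-1}P\,\d_t\eta\big)$. Inserting $\d_t\eta=-P^\rT M^{-1}\d_q\xi$ from Lemma~\ref{lem:hdot*} turns the integrand into $-\,\d_q\xi^\rT M^{-1}PP^\rT M^{-1}\d_q\xi$. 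Finally, applying the tower property and using that the conditional law of $P$ given $Q$ is $h_*$, so that $\bE(PP^\rT\mid Q)=\cT M$ by~(\ref{E*PPQ}), collapses $M^{-1}(\cT M)M^{-1}$ to $\cT M^{-1}$ and gives $\ddot{\cG}=-\cT\,\bE\big(\|\d_q\xi\|_{M^{-1}}^2\big)$, hence $\ddot{\cH}=\cT\,\bE\big(\|\d_q\xi\|_{M^{-1}}^2\big)$, which is~(\ref{H..}).

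The only delicate point is the bookkeeping of which quantities vanish at $t\in\fS$: once one keeps straight that $h=h_*$ kills $\eta$ and every spatial derivative of $\eta$ (and $\gamma$) at that instant while the relevant mixed time-space derivatives are supplied by Lemma~\ref{lem:hdot*}, the differentiations under the integral sign and the tower-property reduction are routine, and the integrability/flux-decay at infinity needed to move $\d_t$ inside the integral is of the same type assumed elsewhere in the paper.
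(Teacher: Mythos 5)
Your proof is correct and follows essentially the same route as the paper: differentiating the dissipation identity of Lemma~\ref{lem:bDdot} twice under the integral, using $\d_p\eta=0$ at $t\in\fS$ to discard the $\d_t f$, $\d_t^2 f$ terms, and invoking Lemma~\ref{lem:hdot*} to get $\d_p\d_t\eta=-M^{-1}\d_q\xi$ for (\ref{F...}); and for (\ref{H..}), setting $\gamma=0$ in (\ref{cGddot}), inserting $\d_t\eta=-P^\rT M^{-1}\d_q\xi$, applying the tower property with (\ref{E*PPQ}), and using $\ddot{\cH}=-\ddot{\cG}$. No gaps.
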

\begin{proof}
By twice differentiating the expectation in (\ref{bDdot}) and letting $h(\cdot\mid \cdot, t)=h_*$, which holds at any $t \in \fS$, it follows that
\begin{align}
\nonumber
    -2\dddot{\cF}
     & =
     (\bE
    (
    \|\d_p \eta\|_D^2
    ))^{^{\centerdot\centerdot}}\\
\nonumber
     & =
     \int_{S\x \mR^n}
     (
     \d_t^2f \|\d_p \eta\|_D^2
     +
     2\d_tf \d_t(\|\d_p \eta\|_D^2)
     +
     f \d_t^2(\|\d_p \eta\|_D^2)
     )
     \rd x\\
\label{-2Fddd}
    & =
    2
     \int_{S\x \mR^n}
     f \|\d_t\d_p \eta\|_D^2
     \rd x
     = 2
     \bE
     (
     \|\d_t\d_p \eta\|_D^2).
\end{align}
Indeed, for any such time $t$, (\ref{eta}), (\ref{cH}), (\ref{fS1}) imply that $\eta(t,\cdot, \cdot)=0$ everywhere in the phase space $S \x \mR^n$, and hence, $\d_p \eta = 0$.  Therefore,  in view of the time independence of the diffusion matrix $D$,
\begin{align*}
    \d_t(\|\d_p \eta\|_D^2)
    & =
    2 \d_p \eta^\rT D \d_t\d_p \eta=0,\\
    \d_t^2(\|\d_p \eta\|_D^2)
    & =
    2 (\|
        \d_t\d_p \eta\|_D^2 + \d_p \eta^\rT D \d_t^2\d_p \eta)
        =
        2 \|
        \d_t\d_p \eta\|_D^2.
\end{align*}
The interchangeability of differentiation in time and the momentum variables  yields
\begin{equation}
\label{dtdp}
    \d_t\d_p \eta
     = \d_p  \d_t\eta
     =
    -\d_p(p^\rT M^{-1} \d_q \xi)
    =
    -M^{-1} \d_q \xi,
    \qquad
    t \in \fS.
\end{equation}
Here, use is also made of the relation (\ref{lnhdot*}) together with the $p$-independence of the mass matrix $M$ and the logarithmic PDF ratio $\xi$ in (\ref{xi}). Substitution of (\ref{dtdp}) into (\ref{-2Fddd}) leads to (\ref{F...}). Furthermore, by substituting (\ref{lnhdot*}) into (\ref{cGddot}) and recalling that at any time $t \in \fS$,   (\ref{EPQ}) satisfies $\gamma =0$ in view of $h=h_*$ and  (\ref{E*PQ}), it follows that
\begin{align}
\nonumber
\ddot{\cG}
   & =
   -
  \bE
  (
    \d_q\xi^\rT M^{-1} PP^\rT M^{-1} \d_q \xi)\\
\label{G..}
   & =
   -
  \bE
  (
    \d_q\xi^\rT M^{-1} \bE_*(PP^\rT \mid Q) M^{-1} \d_q \xi)    =
   -
   \cT
  \bE
  (
    \|\d_q\xi\|_{M^{-1}}^2),
\end{align}
where use is also made of (\ref{E*PPQ}).  The relation (\ref{H..}) can now be obtained from (\ref{G..}) and the second equality in (\ref{GGsign}).
\end{proof}

 Using Lemmas~\ref{lem:bDdot}--\ref{lem:F...},
the following theorem establishes a strict monotonicity property for the position-momentum entropy (\ref{cF}).

\begin{thm}
\label{th:entmono}
For the stochastic Hamiltonian system (\ref{SH1}), (\ref{SH2})  satisfying the damping-diffusion condition  (\ref{FD}),  the relative entropy (\ref{cF}) of the position-mo\-men\-tum  PDF $f$ with respect to the invariant PDF $f_*$ in (\ref{f*}) strictly decreases in time until $f_*$ is reached. 
\end{thm}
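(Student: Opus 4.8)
The plan is to combine the dissipation inequality $\dot{\cF}\<0$ from Lemma~\ref{lem:bDdot} with the fine structure at the break set $\fS$ established in Lemmas~\ref{lem:fS}--\ref{lem:F...}, and then argue that $\fS$ consists of isolated points so that $\cF$ is strictly decreasing on any interval on which $f_*$ has not yet been reached. First I would observe that by Lemma~\ref{lem:bDdot} the function $\cF$ is nonincreasing on $[0,\infty)$, with $\dot{\cF}(t)=0$ if and only if $t\in\fS$, which by (\ref{fS1}) is equivalent to $\cH(t)=0$, i.e. $h(\cdot\mid\cdot,t)=h_*$ everywhere. So strict decrease can only fail on a time interval containing a sub-interval of $\fS$; it suffices to show $\fS$ has empty interior (in fact is discrete away from the equilibrium regime), unless $f(t,\cdot)=f_*$ identically from some time on.

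The core step is the dichotomy at a break time $t_0\in\fS$. Either $g(t_0,\cdot)=g_*$, in which case $f(t_0,\cdot)=g_*h_*=f_*$, the system is at equilibrium, and (being a steady state of the FPKE) it stays there, so $\cF\equiv0$ thereafter and there is nothing left to prove; or $g(t_0,\cdot)\neq g_*$, meaning $\d_q\xi(t_0,\cdot)$ is not identically zero. In the latter case Lemma~\ref{lem:F...} gives $\dddot{\cF}(t_0)=-\bE(\|M^{-1}\d_q\xi\|_D^2)<0$ strictly (using $D\succ0$ and that $\d_q\xi$ does not vanish identically), while (\ref{FFsign}) gives $\dot{\cF}(t_0)=\ddot{\cF}(t_0)=0$. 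Hence $t_0$ is a strict stationary point of inflection of $\cF$ (and simultaneously, via (\ref{GGsign}), (\ref{H..}), of a local max of $\cG$ and a local min $\cH=0$ of $\cH$, realising the ``waterbed'' exchange). The Taylor expansion $\cF(t)=\cF(t_0)-\tfrac16|\dddot{\cF}(t_0)|(t-t_0)^3+o((t-t_0)^3)$ shows $\cF(t)<\cF(t_0)$ for $t$ in a punctured right-neighbourhood of $t_0$ and $\dot{\cF}(t)<0$ for $t$ in a punctured two-sided neighbourhood; in particular $t_0$ is an isolated point of $\fS$. Therefore, on any interval on which the equilibrium has not been attained, every zero of $\dot{\cF}$ is isolated, $\dot{\cF}\<0$ everywhere with strict inequality off a discrete set, and so $\cF$ is strictly decreasing there.

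The main obstacle is the regularity/justification of differentiating $\cF$ three times in time and of interchanging $\d_t$ with spatial operators and with the spatial integrals — the same $C^{1,2}$-plus-decay hypotheses underlying Lemmas~\ref{lem:FPKE}--\ref{lem:F...} are needed, and strictly speaking one should assume $\cF$ is (say) $C^3$ in $t$ near each break time, or invoke the hypoelliptic smoothing mentioned after (\ref{hpos}). A secondary subtlety is the ``once at equilibrium, stays at equilibrium'' claim: this uses uniqueness of the invariant measure / forward uniqueness for the FPKE, which should be flagged. Modulo these analytic caveats, the strict monotonicity follows by patching: $\cF$ is continuous and nonincreasing, and cannot be constant on any non-degenerate interval before equilibrium because such an interval would lie in $\fS$, contradicting that $\fS$ is discrete off equilibrium by the inflection argument above.
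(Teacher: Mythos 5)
Your proposal is correct and follows essentially the same route as the paper: the dissipation inequality of Lemma~\ref{lem:bDdot}, the identification of the break set via $\cH=0$ (i.e.\ $h=h_*$), the strict negativity of $\dddot{\cF}$ from Lemma~\ref{lem:F...} when $g\ne g_*$, the cubic-parabola (stationary inflection) expansion showing the pre-equilibrium break times are isolated, and integration of $\dot{\cF}<0$ off this discrete set. The only cosmetic difference is that the paper avoids your ``once at equilibrium, stays there'' step by defining the first hitting time $\tau$ of $f_*$ and proving strict decrease only on $[0,\tau]$, which is all the statement requires.
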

\begin{proof}
Denote by
\begin{equation}
\label{tau}
    \tau
    :=
    \inf
    \{
        t \> 0:\
        f(t,\cdot) = f_*\
        {\rm everywhere\ in}\
        S\x\mR^n
    \}
\end{equation}
the first time when the position-momentum PDF $f$ reaches the invariant PDF $f_*$, with the convention that $\tau:= +\infty$ if this never happens (that is, if the set in (\ref{tau}) is empty). By appropriately restricting the set $\fS$ in  (\ref{fS}) to
\begin{equation}
\label{fStau}
    \fS_\tau:
    =
    (0,\tau)\bigcap \fS
    =
    \{0< t< \tau:\ \dot{\cF}(t)=0\},
\end{equation}
it follows that  for any $t \in \fS_\tau$, the position PDF is different from its invariant counterpart: $g(t,\cdot)\ne g_*$.  Indeed, together with $h(\cdot \mid \cdot, t)=h_*$, which holds at any $t \in \fS$,      the fulfillment of $g(t,\cdot)=g_*$ would imply that $f(t,\cdot)=f_*$ (that is, the equilibrium is already reached) and hence,  $t \> \tau$, thus contradicting the inequality $t< \tau$.  Therefore, in view of (\ref{F...}),
\begin{equation}
\label{F...neg}
    \dddot{\cF} < 0,
    \qquad
    t \in \fS_\tau.
\end{equation}
In combination with (\ref{fS}) and the first equality in (\ref{FFsign}), (\ref{F...neg}) implies that any such instant is a stationary point of inflection for the relative entropy $\cF$ as a function of time. In a small neighbourhood of such a point, $\cF$ behaves asymptotically as a strictly decreasing cubic parabola:
\begin{equation}
\label{cubic}
    \cF(s)=\cF(t) + \frac{1}{6}\dddot{\cF}(t)(s-t)^3 + o(|s-t|^3),
    \qquad
    {\rm as}\
    s \to t \in \fS_\tau.
\end{equation}
Therefore, the set $\fS_\tau$ in (\ref{fStau}) consists of isolated points and is countable (hence, of zero Lebesgue measure), which, in view of
\begin{equation*}
\label{Ffot...neg}
    \dot{\cF} < 0,
    \qquad
    t \in (0,\tau)\setminus \fS,
\end{equation*}
makes $\dot{\cF}$ strictly negative almost everywhere in the interval $[0,\tau]$. The latter implies that $\cF(t)-\cF(s) = \int_s^t \dot{\cF}(u)\rd u < 0$ for all $0\< s < t \< \tau$, and hence, $\cF$ is a strictly decreasing function of time over $[0,\tau]$.
\end{proof}

The strict monotonicity of the relative entropy $\cF$, proved in Theorem~\ref{th:entmono},  employs the observation that the position-momentum distribution does not ``stay'' at those pairs $(g,h)$,  where $\dot{\cF}=0$,  unless the system has reached the equilibrium PDF $f_*$.
 As illustrated in Figure~\ref{fig:FGH},
\begin{figure}[htbp]
{\centering
\includegraphics[width=7 cm]{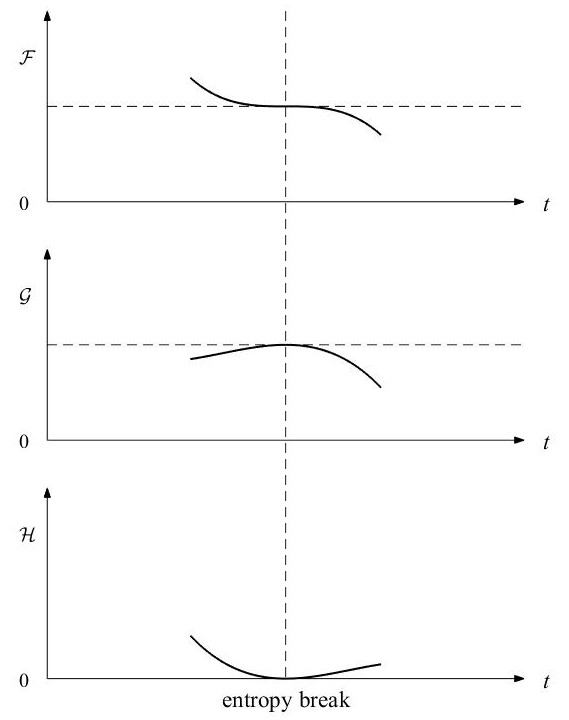}
\caption{An illustration of the local behaviour of the entropies $\cF$, $\cG$, $\cH$ from (\ref{cF}), (\ref{cG}), (\ref{cH}) in the vicinity of a pre-equilibrium  entropy dissipation break time in (\ref{fStau}),  which is described asymptotically by the cubic parabola (\ref{cubic}) and two quadratic parabolas, whose nonzero coefficients (\ref{F...}), (\ref{H..}) make the inequalities in
(\ref{FFsign})--(\ref{HHsign}) strict.
\label{fig:FGH}}}
\end{figure}
the local behaviour of $\cF$ and its components $\cG$, $\cH$ in the vicinity of any pre-equilibrium   entropy dissipation  break time in (\ref{fStau}) is described asymptotically by the cubic parabola (\ref{cubic}) and two (concave and convex)  quadratic parabolas, respectively.  The inequalities in (\ref{FFsign})--(\ref{HHsign}) for the coefficients of these parabolas are strict  in view of (\ref{F...}), (\ref{H..}), which can be interpreted as a local exchange between the position and conditional momentum  entropies $\cG$, $\cH$, so that the increase in one of these entropy components is compensated by the decrease in the other at the level of the first and second-order terms of their Taylor
series expansions. This ``waterbed''  behaviour is similar to the total energy dissipation and potential-kinetic energy exchange relations (\ref{H...})--(\ref{T0}) in the case of deterministic dissipative Hamiltonian dynamics and can be regarded as a manifestation of the BKL principle \cite{BK_1952,L_1960} (mentioned in the Introduction) in application to the position-momentum entropy $\cF$ as a Lyapunov functional for the FPKE in the stochastic  Hamiltonian setting.

\section{Linearised Dynamics of Logarithmic PDF Ratios}
\label{sec:lin}

The pair
\begin{equation}
\label{zeta}
    \zeta:= (\xi, \eta) = \Theta(f)
\end{equation}
of the logarithmic PDF ratios (\ref{xi}), (\ref{eta}) results from a nonlinear bijective  transformation $\Theta$ of the position-momentum PDF $f$. The latter is uniquely recovered from $\zeta$ as $f = \Theta^{-1}(\zeta) =  f_*\re^{\xi+\eta}$ in view of (\ref{theta}).
The transformation $\Theta$ and its inverse $\Theta^{-1}$ allow the FPKE (\ref{fdot}) to be represented in terms of $\zeta$ as
\begin{equation}
\label{zetadot}
    \d_t \zeta
    =
    \Theta'(f)(\d_t f)
    =
    \Theta'(f)(\cL^\dagger(f))
    =
    \Theta'(f)(\cL^\dagger(\Theta^{-1}(\zeta))),
\end{equation}
where the linear operator $\Theta'(f)$ is the formal Frechet derivative of $\Theta$ with respect to  $f$. In accordance with the position and conditional momentum PDF dynamics (\ref{gdot}), (\ref{hdot}) of the stochastic Hamiltonian system, (\ref{zetadot}) consists of two equations for the evolution of $\xi$, $\eta$:
\begin{align}
\label{xidot}
    \d_t \xi
     = &
    - \frac{1}{g_*}\div_q(g_*M^{-1}\gamma) - \gamma^\rT  M^{-1}\d_q\xi,\\
\nonumber
    \d_t \eta
    = &
        \{H,\eta + \ln h_*\}
    +
    p^\rT M^{-1} F \d_p(\eta + \ln h_*)\\
\nonumber
    & +
    \frac{1}{2}
    \bra
        D, \d_p^2 (\eta + \ln h_*)
    \ket
    +
    \frac{1}{2}
    \|\d_p (\eta + \ln h_*)\|_D^2\\
\nonumber
    & +
    \bra
        F, M^{-1}
    \ket
    -p^\rT M^{-1}\d_q (\xi + \ln g_*)
    -\d_t \xi\\
\nonumber
    = &
        \{H,\eta\}
    -\frac{1}{2}\beta
    p^\rT M^{-1} D \d_p\eta
    +
    \frac{1}{2}
    \bra
        D, \d_p^2 \eta
    \ket
    +
    \frac{1}{2}
    \|\d_p \eta\|_D^2
    -p^\rT M^{-1}\d_q \xi
    -\d_t \xi\\
\nonumber
    =&
    \Big(
        \d_q H -
        \frac{1}{2}
        \beta
        D M^{-1}p
    \Big)^\rT \d_p \eta
    -
    p^\rT M^{-1}\d_q(\xi+\eta)\\
\label{etadot}
    & +
    \frac{1}{2}
    (\bra
        D,
        \d_p^2 \eta
    \ket
    +
    \|\d_p \eta\|_D^2) - \d_t\xi.
\end{align}
Here, (\ref{xidot}) follows directly from (\ref{xixi}), while (\ref{etadot}) is obtained by substituting $\ln h = \eta + \ln h_*$ into (\ref{lnhdot}) and using (\ref{gdot3}) along with the damping-diffusion condition (\ref{FD}) and (\ref{EPQ}) in the form
\begin{align}
\label{EPQeta}
  \gamma(t,q)
   =
  \int_{\mR^n}
   p h_*(p\mid q) \re^{\eta(t,q,p)}
  \rd p,
  \qquad
  t\> 0,\
  q \in S.
\end{align}
In (\ref{etadot}),  use is also made of the fact that the invariant position and conditional momentum PDFs $g_*$, $h_*$ satisfy
\begin{align*}
        \{H,\ln h_*\}
     & +
    p^\rT M^{-1} F \d_p\ln h_*\\
    & +
    \frac{1}{2}
    (\bra
        D, \d_p^2 \ln h_*
    \ket
    +
    \|\d_p \ln h_*\|_D^2)
    +
    \bra
        F, M^{-1}
    \ket
    \underbrace{-p^\rT M^{-1}\d_q \ln g_*}_{\{H, \ln g_*\}} = 0
\end{align*}
in view of the stationarity condition (\ref{FPKEh*}) and the relation $\{H, \ln g_*\} + \{H, \ln h_*\} = \{H, \ln f_*\} = 0$.

For the system near the equilibrium,  when the PDFs $g$, $h$ are close to their invariant counterparts $g_*$, $h_*$, so that the logarithmic PDF ratios $\xi$, $\eta$ in (\ref{xi}), (\ref{eta}) are ``close'' to $0$ together with relevant derivatives (a precise meaning of this proximity needs a separate consideration),  the equations (\ref{xidot})--(\ref{EPQeta}) admit the following linearisation:
\begin{align}
\label{xidotlin}
  \d_t \xi
  \approx&
      - \frac{1}{g_*}\div_q(g_*M^{-1}\gamma), \\
\label{etadotlin}
    \d_t \eta
    \approx &
    \Big(
        \d_q H -
        \frac{1}{2}
        \beta
        D M^{-1}p
    \Big)^\rT \d_p \eta
    -
    p^\rT M^{-1}\d_q(\xi+\eta)
    +
    \frac{1}{2}
    \bra
        D,
        \d_p^2 \eta
    \ket
    - \d_t\xi,\\
\label{EPQetalin}
  \gamma
   \approx &
  \int_{\mR^n}
   p h_*\eta
  \rd p.
\end{align}
The linearised form of the normalization conditions (\ref{gnorm}), (\ref{hnorm}) on $g$, $h$, which is preserved by (\ref{xidotlin}), (\ref{etadotlin}), is given by
\begin{align}
\label{normlin1}
    \int_S g_* \xi \rd q
    & \approx
    \int_S (\re^\xi-1) g_* \rd q
    =
    \int_S(g-g_*) \rd q
    =
    0,\\
\label{normlin2}
    \int_{\mR^n}
    h_* \eta \rd p
    & \approx
    \int_{\mR^n}
    (\re^\eta -1)
    h_*
    \rd p
    =
    \int_{\mR^n}
    (h-h_*)
    \rd p
    =
    0.
\end{align}
The linearisation in (\ref{xidotlin})--(\ref{normlin2})  neglects the terms $    - \gamma^\rT M^{-1}\d_q \xi$ and
$
    \frac{1}{2}
    \|\d_p \eta\|_D^2
$
of the second order of smallness with respect to scaling the functions $\xi$, $\eta$ in (\ref{xidot}), (\ref{etadot}),  along with
\begin{equation}
\label{2nd}
    \re^z-1-z = \frac{1}{2}z^2 + O(z^3),
    \qquad
    {\rm as}\
    z\to 0.
\end{equation}
By reorganising the pair (\ref{zeta}) into an $\mR^2$-valued function,  the linearised dynamics of the logarithmic PDF ratios in (\ref{xidotlin})--(\ref{EPQetalin})   are represented in vector-matrix form as
\begin{equation}
\label{xietadot}
    \d_t \zeta
    \approx
    \Lambda (\zeta),
    \qquad
    \zeta
    :=
    \begin{bmatrix}
      \xi\\
      \eta
    \end{bmatrix},
    \qquad
    \Lambda
    :=
    \begin{bmatrix}
          0 & \Xi\\
          \Phi & \Psi
    \end{bmatrix},
\end{equation}
where $\Phi$, $\Psi$, $\Xi$  are linear operators,  acting on functions $\varphi\in  C^1(S, \mC)$ and $\psi\in C^2(S\x \mR^n, \mC)$ as
\begin{align}
\label{Phi}
    \Phi(\varphi)
    & :=
        -
    p^\rT M^{-1}\d_q\varphi,\\
\label{Psi}
    \Psi(\psi)
    & :=
    \Phi(\psi)
    +
    \Big(
        \d_q H -
        \frac{1}{2}
        \beta
        D M^{-1}p
    \Big)^\rT \d_p \psi
    +
    \frac{1}{2}
    \bra
        D,
        \d_p^2 \psi
    \ket  - \Xi(\psi),\\
\label{Xi}
    \Xi(\psi)
    & :=
    -\frac{1}{g_*}
    \div_q
    \Big(
        g_*
        M^{-1}
        \int_{\mR^n}
        p h_*
        \psi
        \rd p
    \Big).
\end{align}
The integro-differential operator $\Xi$ originates from substituting the right-hand side of (\ref{EPQetalin}) for $\gamma$ into (\ref{xidotlin}). In (\ref{xietadot}), the operators $\Phi$, $\Psi$, $\Xi$ act over the position and momentum variables of $\xi$, $\eta$. Any eigenvalue $\lambda \in \mC$ of the linear operator $\Lambda$ in  (\ref{xietadot}) and the corresponding eigenfunctions $\varphi$, $\psi$ subject to the linear constraints (\ref{normlin1}), (\ref{normlin2}) satisfy
\begin{equation}
\label{eig}
    \lambda \varphi = \Xi (\psi),
    \qquad
    \lambda^2 \psi
    =
    (
        \Phi \circ \Xi  + \lambda \Psi
    )(\psi).
\end{equation}
The composition $\Phi \circ \Xi$ of the operators $\Phi$, $\Xi$ in  (\ref{Phi}), (\ref{Xi}), which  affects the spectrum of $\Lambda$ through  (\ref{eig}), acts as
\begin{equation}
\label{PhiXipsi}
    \Phi(\Xi(\psi))
    =
    p^\rT M^{-1}
    \d_q
    \Big(
    \frac{1}{g_*}
    \div_q
    \Big(
        g_*
        M^{-1}
        \int_{\mR^n}
        p h_*
        \psi
        \rd p
    \Big)
    \Big)
\end{equation}
and is self-adjoint with respect to the inner product $\bra u, v\ket_{f_*}   := \int_{S\x \mR^n} \overline{u}v f_* \rd x$  in the weighted Hilbert space $L_{f_*}^2(S\x \mR^n, \mC)$ (with $\overline{(\cdot)}$ the complex conjugate).  Indeed, by applying the Gauss-Ostrogradsky theorem and integrating by parts, it follows that
\begin{align}
\nonumber
    \bra
        \Phi(\varphi),
        \psi
    \ket_{f_*}
    & =
    -
    \int_{S \x \mR^n}
    p^\rT M^{-1}
    \d_q \overline{\varphi}
    \psi
    f_*
    \rd x    \\
\nonumber
    & =
    -
    \int_S
    \d_q \overline{\varphi}^\rT
    M^{-1}
    \Big(
    \int_{\mR^n}
    ph_* \psi
    \rd p
    \Big)
    g_*
    \rd q    \\
\nonumber
    & =
    \int_S
    \overline{\varphi}
    \div_q
    \Big(
    g_*
    M^{-1}
    \int_{\mR^n}
    ph_* \psi
    \rd p
    \Big)
    \rd q    \\
\label{braket*}
    & =
    -
    \int_S
    \overline{\varphi}
    \Xi(\psi)
    g_*
    \rd q
     =
     -
    \bra
        \varphi,
        \Xi(\psi)
    \ket_{g_*}
\end{align}
for any functions $\varphi \in C^1(S, \mC)$ and  $\psi \in C^1(S \x \mR^n, \mC)$ of bounded support,
where $\bra \mu, \nu\ket_{g_*}:= \int_S \overline{\mu} \nu g_* \rd q$ is a similar inner product in the weighted Hilbert space $L_{g_*}^2(S,\mC)$. The relation (\ref{braket*}) implies that $\Phi$, $\Xi$, as operators acting between $L_{g_*}^2(S, \mC)$ and $L_{f_*}^2(S\x \mR^n, \mC)$, satisfy
\begin{equation}
\label{PhiXidagger}
    \Phi = -\Xi^\dagger ,
\end{equation}
whereby $\Phi\circ \Xi$ in (\ref{PhiXipsi}) is a negative semi-definite self-adjoint operator on the appropriate subspace of $L_{f_*}^2(S\x \mR^n, \mC)$:
\begin{equation}
\label{PhiXineg}
    \Phi\circ \Xi = -\Xi^\dagger\circ \Xi\preccurlyeq 0.
\end{equation}
The negative semi-definiteness  is also seen directly from (\ref{braket*}) with $\varphi:= \Xi(\psi)$, which yields
\begin{equation}
\label{dir}
        \bra
        \Phi(\Xi(\psi)),
        \psi
    \ket_{f_*}
    = -
    \|\Xi(\psi)\|_{g_*}^2\< 0,
\end{equation}
where $\|\cdot\|_{g_*}$ is the weighted norm in $L_{g_*}^2(S, \mC)$.

On the other hand, since the invariant PDF $f_*$ satisfies $\Theta(f_*) =  0$ or, equivalently, $\Theta^{-1}(0) = f_*$, then (\ref{xidotlin})--(\ref{EPQetalin}) can be regarded as the linearisation  of (\ref{zetadot}) near the equilibrium $\zeta =0$. Therefore, the operator $\Lambda$ in (\ref{xietadot}), which, in view of (\ref{PhiXidagger}), takes the form
\begin{equation}
\label{LamXi}
    \Lambda
    =
    \begin{bmatrix}
          0 & \Xi\\
          -\Xi^\dagger & \Psi
    \end{bmatrix},
\end{equation}
is related by a similarity transformation
\begin{equation}
\label{Lambda}
    \Lambda
    =
    \Theta'(f_*)\circ \cL^\dagger \circ (\Theta^{-1})'(0)
\end{equation}
(and hence, is isospectral) to the operator $\cL^\dagger$ in (\ref{fdot}) since
$(\Theta^{-1})'(0) = (\Theta'(f_*))^{-1}$. The representation (\ref{LamXi}) shows that,  in the absence of $\Psi$, the operator $\Lambda$   would be skew self-adjoint on $L_{g_*}^2(S, \mC)\x L_{f_*}^2(S\x \mR^n, \mC)$, with its eigenvalues in (\ref{eig}) being   purely imaginary:  $\lambda = \pm i\omega$, where the eigenfrequencies $\omega\> 0$ are the singular values of $\Xi$. Therefore, it is due to the operator $\Psi$ in (\ref{Psi}) that $\Lambda$ (and hence, $\cL^\dagger$ in (\ref{Lambda})) acquires a spectrum with negative real parts, securing an exponentially fast convergence to the equilibrium. Note that every eigenvalue $\lambda$ in (\ref{eig}) is a root of a quadratic equation
\begin{equation}
\label{quad}
    \|\psi\|_{f_*}^2 \lambda^2
    -
    \bra \psi, \Psi(\psi)\ket_{f_*} \lambda
    +
    \|\Xi(\psi)\|_{g_*}^2
    =
    0,
\end{equation}
whose coefficients depend on the corresponding eigenfunction $\psi$. This equation is
obtained by applying the inner product $\bra \psi, \cdot\ket_{f_*}$ to  both sides of the second equality in (\ref{eig}) and using the weighted norm $\|\cdot\|_{f_*}$ in $L_{f_*}^2(S\x \mR^n, \mC)$ along with (\ref{PhiXineg}), (\ref{dir}). While the leading coefficient and the free term in (\ref{quad}) satisfy $\|\psi\|_{f_*}^2 >0$ and $\|\Xi(\psi)\|_{g_*}^2\> 0$,
the second coefficient $    -
    \bra \psi, \Psi(\psi)\ket_{f_*} $ can, in general,  take complex values.
The above clarifies the role of the position-momentum conditioning for spectral analysis of the PDF dynamics in the dissipative stochastic Hamiltonian setting.

For completeness, we also note that near the equilibrium  $\zeta = 0$ (where the linearisation (\ref{xietadot}) applies), the position and conditional momentum entropies $\cG$, $\cH$ in (\ref{cG}), (\ref{cH}) are of the second order of smallness with respect to $\xi$, $\eta$, and the quadratic term in (\ref{2nd}) has to be taken into account as a correction to (\ref{normlin1}), (\ref{normlin2}). More precisely,
\begin{align}
\nonumber
    \cG
    & =
    \int_S
    g_*\re^{\xi}\xi \rd q
    =
    \int_S
    \Big(\re^\xi-1+\frac{1}{2}\xi^2\Big)
    g_*\rd q +    \wt{\cG}    \\
\label{cGquad}
    & =
    \underbrace{\int_S (g-g_*)\rd q}_0
    +
    \frac{1}{2}
    \int_S g_* \xi^2\rd q
    +
    \wt{\cG}
    =
    \frac{1}{2}
    \int_S g_* \xi^2\rd q
    +
    \wt{\cG},
\end{align}
with
\begin{equation}
\label{cGrem}
    \wt{\cG} :=
    \frac{1}{6}
    \int_S
    g_*
    \re^{\mu \xi}
    (\mu\xi + 2)
    \xi^3
    \rd q,
\end{equation}
where the function $\mu: S \to (0,1)$ originates from the Lagrange remainder for $\phi(z):= \re^z (z-1)+1 -\frac{1}{2} z^2
$ with $\phi(0) = \phi'(0)=\phi''(0)=0$ and $\phi'''(z) = \re^z(z+2)$.
By a similar reasoning, the position-momentum entropy (\ref{cF}) admits the representation
\begin{align}
\nonumber
    \cF
    & =
    \int_{S\x \mR^n}
    f_*\re^{\theta}\theta \rd x
    =
    \int_{S\x \mR^n}
    \Big(\re^\theta-1+\frac{1}{2}\theta^2\Big)
    f_*\rd x +    \wt{\cF}    \\
\nonumber
    & =
    \underbrace{\int_{S\x \mR^n} (f-f_*)\rd x}_0
    +
    \frac{1}{2}
    \int_{S\x \mR^n} f_* \theta^2\rd x
    +
    \wt{\cF}\\
\label{cFquad}
    & =
    \frac{1}{2}
    \Big(
    \int_S
    g_* \xi^2\rd q
    +
    \int_{S\x \mR^n} f_* \eta^2\rd x
    \Big)
    +
    \int_{S\x \mR^n}
    f_* \xi\eta\rd x
    +
    \wt{\cF},
\end{align}
where use is also made of the logarithmic PDF ratio $\theta$ from (\ref{theta}) along with a remainder term
\begin{equation}
\label{cFrem}
    \wt{\cF} :=
    \frac{1}{6}
    \int_{S\x \mR^n}
    f_*
    \re^{\sigma \theta}
    (\sigma\theta + 2)
    \theta^3
    \rd x
\end{equation}
and an auxiliary function $\sigma: S \x \mR^n \to (0,1)$.  In turn, the last integral in (\ref{cFquad}) takes the form
\begin{align}
\nonumber
    \int_{S\x \mR^n}
    f_* \xi\eta\rd x
    & =
    \int_S
    \Big(
    \int_{\mR^n}
    h_* \eta
    \rd p
    \Big)
    g_* \xi
    \rd q\\
\nonumber
    & =
    \int_S
    \Big(
    \int_{\mR^n}
    \Big(
        \re^\eta - 1 - \frac{1}{2}\re^{\nu\eta}\eta^2
    \Big)
    h_*
    \rd p
    \Big)
    g_* \xi
    \rd q    \\
\nonumber
    & =
    \int_S
    \Big(
    \underbrace{\int_{\mR^n}
    (
        h-h_*
    )
    \rd p}_0
    \Big)
    g_* \xi
    \rd q
     -
     \frac{1}{2}
     \int_{S\x \mR^n}
     f_*
     \xi\re^{\nu\eta}\eta^2
     \rd x\\
\label{int}
     & =
     -
     \frac{1}{2}
     \int_{S\x \mR^n}
     f_*
     \xi\re^{\nu\eta}\eta^2
     \rd x
\end{align}
in view of (\ref{normlin2}), with $\nu: S\x \mR^n \to (0,1)$ an auxiliary function.  By  combining (\ref{bDfgh}) with (\ref{cGquad}), (\ref{cFquad}), (\ref{int}), the conditional momentum entropy $\cH$ in (\ref{cH}) is represented as
\begin{equation}
\label{cHquad}
    \cH
    =
    \cF - \cG
    =
        \frac{1}{2}
    \int_{S\x \mR^n} f_* \eta^2\rd x
    +
    \wt{\cH},
\end{equation}
where
\begin{equation}
\label{cHrem}
    \wt{\cH}
    :=
     \wt{\cF} - \wt{\cG}
     -
     \frac{1}{2}
     \int_{S\x \mR^n}
     f_*
     \xi\re^{\nu\eta}\eta^2
     \rd x.
\end{equation}
The remainder terms $\wt{G}$, $\wt{\cF}$, $\wt{\cH}$ in (\ref{cGrem}), (\ref{cFrem}), (\ref{cHrem}) are of the third order of smallness with respect to $\xi$, $\eta$ and  admit bounds using the family of Renyi relative entropies \cite{R_1961}.  In the framework of the linearised dynamics (\ref{xietadot}),
the quadratic approximation
$$
    \cF
    \approx
    \frac{1}{2}
    (
        \|\xi\|_{g_*}^2
        +
        \|\eta\|_{f_*}^2
    )
$$
of the position-momentum entropy
near the equilibrium,
resulting from (\ref{cFquad}) together with
$$
    \cG
    \approx
    \frac{1}{2}
    \|\xi\|_{g_*}^2,
    \qquad
    \cH
    \approx
    \frac{1}{2}
    \|\eta\|_{f_*}^2
$$
from (\ref{cGquad}), (\ref{cHquad}), is
an infinite-dimensional analogue of quadratic Lyapunov functions for finite-dimensional linear systems \cite{AM_1990}.

\section{Conclusion}\label{sec:conc}

For the class of multivariable stochastic Hamiltonian systems,  governed by the position ODE and the  momentum SDE with conservative, Langevin viscous  damping and external  random forces,  we have employed the factorisation of the joint position-moment PDF into the marginal position PDF and the conditional momentum PDF  given the position.  This position-momentum conditioning has lead to a decomposition of the FPKE into two coupled PDEs and higher-order dissipation relations for the position and conditional momentum Kullback-Leibler relative entropies with respect to the  corresponding Maxwell-Boltzmann equilibrium PDFs under the multivariate Einstein damping-diffusion condition. We have shown that all those entropy dissipation break instants, when  the position-momentum entropy has zero time derivative before the equilibrium is reached, are isolated  strong local maxima and minima of the position and conditional momentum entropies, respectively, which compensate each other in such a way that  the position-momentum entropy has stationary inflection points at those moments of time. An analogy has been discussed between the BKL principle scenarios  in application to the position-momentum entropy and the Hamiltonian as Lyapunov functionals in the stochastic and deterministic settings. We have also considered the linearised dynamics of the logarithmic PDF ratios near the equilibrium and outlined the use of the position-momentum conditioning for spectral analysis of the PDF dynamics in the stochastic Hamiltonian setting.

\end{document}